\newcommand{\ensuremathspace}[1]{\ensuremath{#1}\xspace}
\newcommand{\mathsymbol}[2]{ \newcommand{#1}{\ensuremathspace{#2}} }
\mathsymbol{\naturals}{\mathbb{N}}
\mathsymbol{\integers}{\mathbb{Z}}
\mathsymbol{\reals}{\mathbb{R}}
\mathsymbol{\unitinterval}{[0,1]}
\mathsymbol{\supp}{\mathrm{supp}}
\mathsymbol{\Distr}{Distr}
\mathsymbol{\sinit}{s_{0}}
\mathsymbol{\Act}{Act}
\mathsymbol{\act}{\alpha}
\mathsymbol{\mpm}{P}
\mathsymbol{\mdp}{M}
\mathsymbol{\mdpT}{(S,\sinit,Act,\mpm)}
\mathsymbol{\mcT}{(S,\sinit,\mpm)}
\mathsymbol{\actrandom}{\act_{\mathrm{rand}}}
\newcommand{\sched}[1][]{\ensuremathspace{ \sigma_{#1} }}
\mathsymbol{\schedopt}{\sched[]^{*}}
\newcommand{\schedulers}[1][M]{\Sigma^{#1}}
\mathsymbol{\imc}{\mdp^{\sched}}
\mathsymbol{\variables}{\mathcal{V}}
\mathsymbol{\tree}{\mathcal{T}}
\mathsymbol{\treeT}{(T, \gamma, \delta)}
\mathsymbol{\predicates}{\Psi_{\variables}}
\mathsymbol{\paths}{Path(T)}
\newcommand{\leaf}[1][\tree]{\ensuremathspace{ \mathsf{leaf}^{#1}} }
\mathsymbol{\template}{\mathbb{T}}
\mathsymbol{\templateT}{(T, \Gamma, \Delta)}
\mathsymbol{\decisionfun}{\mathcal{D}}
\mathsymbol{\boundfun}{\mathcal{B}}
\mathsymbol{\actionfun}{\mathcal{A}}
\newcommand{\params}[1][\template]{\mathcal{X}^{#1}}
\newcommand{\fpol}[1][]{\mathsf{pol}_{#1}}
\newcommand{\fdom}{\mathsf{dom}}
\mathsymbol{\fact}{\mathsf{act}}
\mathsymbol{\fsel}{\mathsf{sel}}
\mathsymbol{\fharm}{\mathsf{harm}}
\newcommand{\sts}[2]{\mathsf{st}^{#1}({#2})}
\mathsymbol{\meetsin}{\bowtie_{\;\!i}^{\:\!n}}
\mathsymbol{\unsatcore}{\mathsf{UC}}
\mathsymbol{\paramsuc}{\params[\unsatcore]}
\mathsymbol{\statepathuc}{\mathit{SP}^{\unsatcore}}
\mathsymbol{\criticalstates}{S^{\unsatcore}}
\mathsymbol{\criticalsched}{\sched[\downarrow \criticalstates]}
\mathsymbol{\pf}{\varphi}
\mathsymbol{\family}{\mathcal{F}}
\mathsymbol{\ftemplate}{\template(\family)}
\mathsymbol{\superfamily}{ \family^{\template} }
\mathsymbol{\kmone}{{k{-}1}}
\newcommand{\F}[1]{\Diamond #1}
\newcommand{\prob}[2][]{\mathbb{P}\left[ #1 \models #2\right ]}
\newcommand{\probmax}[2][]{\mathbb{P}_{\max}\left[ #1 \models #2\right ]}
\newcommand{\tool}[1]{{\textsc{#1}}}
\newcommand{\storm}{\tool{Storm}\xspace}
\newcommand{\paynt}{\tool{Paynt}\xspace}
\newcommand{\dtcontrol}{\tool{dtControl}\xspace}
\newcommand{\omdt}{\tool{OMDT}\xspace}
\newcommand{\dtpaynt}{\tool{dtPaynt}\xspace}
\def\orcidID#1{\smash{\href{http://orcid.org/#1}{\protect\raisebox{-1.25pt}{\protect\includegraphics{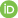}}}}}
\renewcommand{\paragraph}[1]{\smallskip\noindent\emph{#1}}
\renewcommand{\subsubsection}[1]{\medskip\noindent\textbf{#1}}
\newlength\myheight
\newlength\mydepth
\settototalheight\myheight{Xygp}
\newcommand*\inlinegraphics[1]{%
  \settototalheight\myheight{Xygp}%
  \settodepth\mydepth{Xygp}%
  \raisebox{-\mydepth}{\includegraphics[height=\myheight]{#1}}%
}
\definecolor{prismgreen}{HTML}{009900}
\definecolor{prismred}{HTML}{cc0000}
\definecolor{prismblue}{HTML}{0000cc}
\lstdefinelanguage{Prism}{
        basicstyle=\color{prismred}\scriptsize\ttfamily,
        literate=*	{0}{{\textcolor{prismblue}{0}}}{1}
			{1}{{\textcolor{prismblue}{1}}}{1}
			{2}{{\textcolor{prismblue}{2}}}{1}
			{3}{{\textcolor{prismblue}{3}}}{1}
			{4}{{\textcolor{prismblue}{4}}}{1}
			{5}{{\textcolor{prismblue}{5}}}{1}
			{6}{{\textcolor{prismblue}{6}}}{1}
			{7}{{\textcolor{prismblue}{7}}}{1}
			{8}{{\textcolor{prismblue}{8}}}{1}
			{9}{{\textcolor{prismblue}{9}}}{1}
			{.0}{{\textcolor{prismblue}{.0}}}{2}
			{.1}{{\textcolor{prismblue}{.1}}}{2}
			{.2}{{\textcolor{prismblue}{.2}}}{2}
			{.3}{{\textcolor{prismblue}{.3}}}{2}
			{.4}{{\textcolor{prismblue}{.4}}}{2}
			{.5}{{\textcolor{prismblue}{.5}}}{2}
			{.6}{{\textcolor{prismblue}{.6}}}{2}
			{.7}{{\textcolor{prismblue}{.7}}}{2}
			{.8}{{\textcolor{prismblue}{.8}}}{2}
			{.9}{{\textcolor{prismblue}{.9}}}{2}
			{[}{{\textcolor{black}{[}}}{1}
			{]}{{\textcolor{black}{]}}}{1}
			{+}{{\textcolor{black}{+}}}{1}
			{-}{{\textcolor{black}{-}}}{1}
			{=}{{\textcolor{black}{=}}}{1}
			{<}{{\textcolor{black}{<}}}{1}
			{>}{{\textcolor{black}{>}}}{1}
			{\&}{{\textcolor{black}{\&}}}{1}
			{|}{{\textcolor{black}{|}}}{1}
			{:}{{\textcolor{black}{:}}}{1}
			{;}{{\textcolor{black}{;}}}{1}
			{(}{{\textcolor{black}{(}}}{1}
			{)}{{\textcolor{black}{)}}}{1}
			{..}{{\textcolor{black}{..}}}{2},
        keywords= {bool,ceil,const,ctmc,double,dtmc,endinit,endmodule,endrewards, endsystem,F,false,floor,formula,G,global,I,init,int,label,max,mdp,min,module,nondeterministic,P,Pmin,Pmax,prob,probabilistic,rate,rewards,Rmin,Rmax,S,stochastic,system,true,U, option, either, assignment, relation, operation, hole, variable},
        keywordstyle={\bfseries\color{black}},
        numberstyle=\footnotesize\color{black},
        comment=[l] {//}, morecomment=[s]{/*}{*/},
        commentstyle= \color{prismgreen},
        tabsize=4,
        captionpos=b,
        escapechar=^,
        moredelim=[is][\color{orange}]{@}{@},
}
\title{Small Decision Trees for MDPs\\with Deductive Synthesis~}
\author{Roman Andriushchenko\inst{1}\orcidID{0000-0002-1286-934X} \and  Milan \v{C}e\v{s}ka {(\Letter)} \inst{1}\orcidID{0000-0002-0300-9727} \and \\Sebastian Junges\inst{2}\orcidID{0000-0003-0978-8466}  \and Filip~Mac\'{a}k\inst{1}\orcidID{0009-0004-4277-2751}}
\authorrunning{Andriushchenko et al.}
\institute{Brno University of Technology, Brno, Czech Republic\\ \email{ceskam@fit.vut.cz} 
\and Radboud University, Nijmegen, the Netherlands}
\begin{document}

\maketitle

\begin{abstract}
Markov decision processes (MDPs) describe decision making subject to probabilistic uncertainty. 
A classical problem on MDPs is to compute a policy, selecting actions in every state, that maximizes the probability of reaching a dedicated set of target states.
Computing such policies in tabular form is efficiently possible via standard algorithms. However, for further processing by either humans or machines, policies should be represented concisely, e.g., as a decision tree.
This paper considers finding (almost) optimal decision trees of minimal depth and contributes a deductive synthesis approach. Technically, we combine pruning the space of concise policies with an abstraction-refinement loop with an SMT-encoding that maps candidate policies into decision trees. Our experiments show that this approach beats the state-of-the-art solver using an MILP encoding by orders of magnitudes. The approach also pairs well with heuristic approaches that map a fixed policy into a decision tree: for an MDP with 1.5M states, our approach reduces the size of the given tree by 90\%, while sacrificing only 1\% of the optimal performance.

\end{abstract}

\section{Introduction}
Markov decision processes (MDPs) are the ubiquitous model to describe sequential decision making under uncertainty: the outcomes of nondeterministic actions are determined by a probability distribution over the successor states. MDP policies resolve the nondeterminism and describe for each state which action to take.
A classical synthesis task in MDPs is to compute a policy that maximizes a given objective, such as: \emph{Given a set of goal states, find a maximizing policy}, i.e.~a policy ensuring that the goal is reached with the maximal probability. These policies are efficiently computed by probabilistic model checkers such as~\storm~\cite{STORM} or Prism~\cite{KNP11} or can be approximated using (deep) reinforcement learning techniques~\cite{SuttonB:rlbook,DBLP:conf/tacas/Junges0DTK16}.
These techniques apply to huge MDPs that are concisely represented, e.g. in the PRISM language.
The result is a policy represented either in tabular form, mapping states to an action, or as a neural network. While the tabular form is often prohibitively large for further analysis by machines or a human, neural networks are hard to analyze despite the tremendous progress in neural network verification. This observation has motivated the search for concise representations of policies, in particular in the form of programs~\cite{DBLP:conf/icml/VermaMSKC18,batz2024programmatic} or decision trees (DTs)~\cite{bastani2018verifiable,topin2021iterative,vos2023optimal}. \emph{The main contribution of this paper is an approach to synthesize policies that are optimal within a class of small DTs.}

\begin{figure}[t]
    \begin{subfigure}{.25\textwidth}
    \centering
    \includegraphics[width=0.95\textwidth]{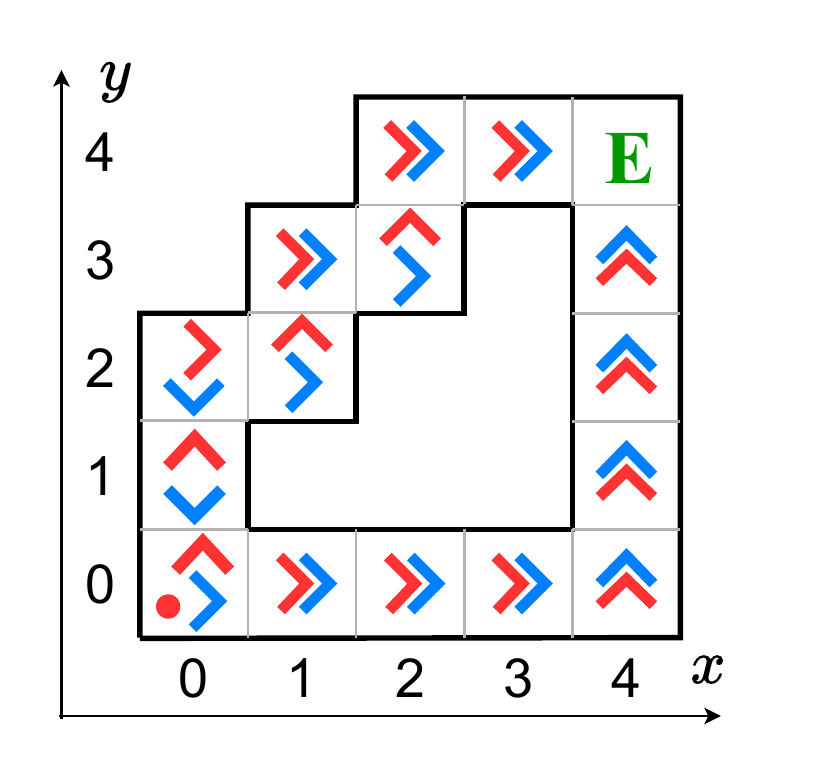}
    \caption{}
    \label{fig:example:maze-2}
    \end{subfigure}%
    \begin{subfigure}[b]{.55\textwidth}
    \centering
    \includegraphics[width=0.95\textwidth]{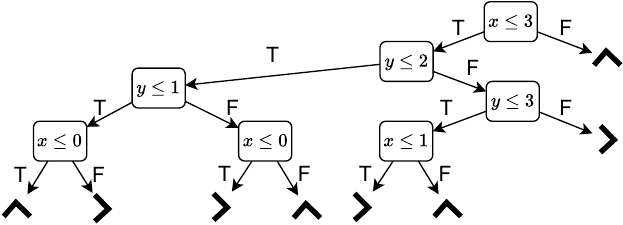}
    \caption{}
    \label{fig:example:maze:tree-4}
    \end{subfigure}%
    \begin{subfigure}[b]{.2\textwidth}
    \centering
    \includegraphics[width=0.95\textwidth]{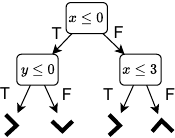}
    \caption{}
    \label{fig:example:maze:tree-2}
    \end{subfigure}%
    \vspace{-0.5em}
\caption{
(a)~A simple slippery maze. The goal is to lead the robot placed in the lower left corner (the red dot) towards the exit cell while minimizing the number of steps.
Red arrowheads illustrate the optimal policy that achieves a value of 12.8 (expected steps). Blue arrowheads illustrate a sub-optimal policy that achieves a value of 13.3.
(b)~The smallest DT implementing the optimal policy has depth 4.
(c)~The smallest DT implementing the sub-optimal policy has only depth 2.
}
\label{fig:example:maze}
\end{figure}

\paragraph{Illustrative example.} Consider a simple grid-world maze as in Fig.~\ref{fig:example:maze}. The agent starts at the bottom left and wants to reach the exit marked \textcolor{green!60!black}{\textbf{E}}. It can move in the four cardinal directions, and each action has a $10\%$ probability of transitioning into an unintended neighboring cell. Consequently, every state is reachable under every policy. When moving in a direction blocked by a wall, the agent bumps into the wall and remains in the same cell. E.g., in the cell $(x=1, y=2)$ moving to the right (as the blue policy does) means the agent will with high probability stay in the same cell but there is a small probability that it slips into the cell above. \storm computes an optimal (in this example, unique) policy that ensures reaching the exit in an expected 12.8 steps (the policy visualized by the red arrowheads).
To represent this policy as a DT  (using predicates that compare variables to constants) requires a tree of depth at least $4$ (see Fig.~\ref{fig:example:maze:tree-4}).
Alternatively, we may ask: What is the optimal expected number of steps to reach the exit among all policies that can be represented as a DT of depth $2$? The answer is $13.3$ realized by the policy visualized by the blue arrowheads. The corresponding DT is shown in Fig.~\ref{fig:example:maze:tree-2}. This policy aims to avoid the `staircase' in the left upper corner and then takes sub-optimal actions within that staircase.

\paragraph{Problem setup.}
We call a policy \emph{$k$-implementable}, if there is a DT of depth~$k$ (and with a particular class of predicates) that represents the policy. The first problem studied in the literature, \emph{the mapping problem}, asks whether a given tabular policy is $k$-implementable for any fixed $k$. Solving this problem then allows us to find the smallest DT, measured by depth, that implements this policy. 
The second problem studied, the \emph{synthesis problem}, is to find a policy that is optimal with respect to some objective, such as reaching the goal state, and within the class of $k$-implementable policies. We want to highlight that the mapping problem assumes one fixed policy. Therefore, the mapping problem cannot find the minimal representation of \emph{any} optimal policy. In particular, in our experiments, we show that we can find optimal policies that are $2$-implementable, whereas the policy that \storm computes is not $5$-implementable. The construction of optimal DTs is well-known to be NP-hard for different notions of optimality~\cite{hancock1996lower,laurent1976constructing}.

\paragraph{State-of-the-art: Policy mapping.}
Mapping policies into small, but not necessarily the smallest, DTs is prominently supported by the tools \mbox{\dtcontrol~\cite{ashok2021dtcontrol}} and Uppaal Stratego~\cite{david2015uppaal,ashok2019sos}. 
These tools approach the problem by \emph{learning} small DTs by recursively splitting the tree based on ideas like information gain. 
The result is an approximation or exact representation of the original policy. Generally, these tools favor scalability over minimality.

\paragraph{State-of-the-art: The synthesis problem.}
The tool
\omdt~\cite{vos2023optimal} builds a monolithic MILP that encodes both the structural constraints on the policy (being $k$-implementable) and the constraint that the policy achieves the optimal value (using the standard LP formulation for maximal discounted rewards). This LP-based approach encounters the same scalability as (MI)LP-based MDP model checking approaches face, e.g., in~\cite{DBLP:conf/atva/DehnertJWAK14,delgrange2020simple,drager2015permissive,andriushchenko2022inductive}.

\paragraph{Our approach: Abstraction refinement with SMT-based mapping.} Inspired by encodings of DTs in propositional formulas~\cite{narodytska2018learning},
we encode the set of $k$-implementable trees in an SMT formula over the bounded integers and with linear inequalities. Using an SMT solver allows us to determine if there is a $k$-implementable tree representing the given policy. Moreover, the encoding allows us to design an abstraction-refinement loop that avoids solving the synthesis problem in one shot. Our approach takes a set of $k$-implementable policies and abstracts them to search for an optimal policy in a larger class of policies.
If this policy is not improving over the best $k$-implementable policy found so far, it abandons the search here. Otherwise, by solving the mapping problem, it tests whether this optimal policy is $k$-implementable. If yes, we can abandon the search here and store the policy as our best policy so far. Otherwise, the policy is \emph{spurious}, and the search is recursively invoked on smaller subsets of $k$-implementable policies. 
\begin{example}
We present a conceptual version of our routine on the example given above, see Fig.~\ref{fig:example:maze}. To find an optimal $2$-implementable policy, we would first search for an optimal memoryless policy \sched. This policy is better than the previously found policy (we can~e.g.~initialize this policy as a random tree).
Using the mapping problem,  \sched is spurious.  We can now split and independently search for the best $k$-implementable policy that goes up in the initial state and for the best $k$-implementable policy that goes right in the initial state. We observe that these sub-classes can be overapproximated by memoryless deterministic policies on two sub-MDPs of the original MDP. 
\end{example}
\paragraph{Effective abstraction refinement.}
In Sec.~\ref{sec:refinement}, we introduce the abstraction-refinement loop \dtpaynt that analyzes the spurious policies to split in an informed way. 
Compared to an abstraction-refinement loop for finite state controllers in POMDPs~\cite{andriushchenko2022inductive}, the set of $k$-implementable policies is highly irregular and a policy can be represented by many different trees. To overcome this problem, we use an unsatisfiable core $\unsatcore$ witnessing that the given policy is not $k$-implementable. We introduce a \emph{harmonization} technique that analyzes $\unsatcore$ and finds two trees that serve as good approximations of the policy and, most importantly, they differ in one parameter that provides a good heuristic on how to construct the subsets of $k$-implementable policies. 
The proposed refinement procedure furthermore entails the ability to first learn a $k$-tree before learning a $k{+}1$-tree.

\paragraph{Relation to partially observable MDPs.}
Finding optimal $k$-implementable policies can be phrased as finding a colouring of states and for every colour an action. This reformulation clarifies a connection to the synthesis for memoryless observation-based policies POMDPs~\cite{li2011finding,kumar2015history,andriushchenko2022inductive}. Usually, in POMDPs, one cannot pick the colouring of the states, but such variations have been investigated in \cite{konsta2024should,DBLP:conf/aips/ChatterjeeCT18}. However, contrary to those settings, in DTs, the state colouring cannot be arbitrary but must be implementable with a DT. 

\paragraph{Contributions.} 
We propose \dtpaynt, an abstraction-refinement loop that iteratively invokes an SMT-based routine to search for DTs with maximum value among all DTs up to the given depth. 
\dtpaynt significantly outperforms  \omdt~\cite{vos2023optimal}, the state-of-the-art tool solving the same problem using a MILP encoding. In contrast to \dtcontrol, a prominent tool for policy mapping, \dtpaynt is able to effectively control the trade-offs between the size and quality of the resulting DTs. For example, for the \emph{consensus} protocol, \dtpaynt finds a DT with 10 inner nodes, which is about 
38 times smaller than the DT found by \dtcontrol, while it achieves  93\% of the optimal performance. 
\dtpaynt can scale to MDPs with hundreds of thousands of states, provided that a small DT with the desired performance exists.
Finally, \dtpaynt can be used to reduce large DTs: for a variant of a \emph{cmsa} model with 1.5M states, we reduce the number of inner nodes in the DT constructed by \dtcontrol from 236 to 22 while sacrificing less than 1\% of the performance.

\section{Background and Problem Statement}

A \emph{distribution} over a countable set $A$ is a~function $\mu \colon A \rightarrow \unitinterval$ s.t.~$\sum_a \mu(a) {=} 1$.
The set $\Distr(A)$ contains all distributions over $A$.

\begin{definition}[MDP]
A \emph{Markov decision process (MDP)} is a tuple $M = \mdpT$ with a finite set $S$ of states, an initial state $\sinit \in S$, a finite (indexed) set $\Act$ of actions, and a partial transition function $\mpm \colon S \times \Act \nrightarrow \Distr(S)$.
\end{definition}
For an MDP $M$, we define the \emph{available actions} in  $s \in S$ as
$\Act(s) \coloneqq$ $ \{ \act \in \Act \mid \mpm(s,\act) \neq \bot \}$;
we denote $\mpm(s,\act,s') \coloneqq \mpm(s,\act)(s')$.
An MDP with $|\Act(s)|=1$ for each $s \in S$ is a \emph{Markov chain (MC)};
we denote MCs as tuples $\mcT$.
A (deterministic, memoryless) \emph{policy} is a function $\sched \colon S \rightarrow \Act$ with $\sched(s) \in \Act(s)$ for all $s \in S$. The set $\schedulers$ denotes the policies for MDP $M$.
A~policy $\sched \in \schedulers$ induces the MC~$\imc = \left(S,s_0,\mpm^{\sched} \right)$ where \mbox{$\mpm^{\sched}(s) = \mpm(s,\sched(s))$}.
A \emph{partial} (deterministic, memoryless) policy is a function $\sched \colon S \nrightarrow \Act$.

\paragraph{Specifications.}
We consider indefinite-horizon reachability and expected reward properties as well as discounted (infinite-horizon) total reward objectives~\cite{puterman2014markov}. To simplify the exposition, we formalize our approach only for the \emph{maximal reachability probability}\footnote{Our implementation supports all the aforementioned specifications.}.
Formally, let $M=\mcT$ be an MC, and let $G \subseteq S$ be a set of \emph{goal states}.
Let $\prob[M,s]{\F{G}}$ denote the probability of reaching (some state in) $G$ from state $s \in S$.
Let $\mathbb{P}[M \models \F{G}]$ denote $\prob[M,\sinit]{\F{G}}$.
For MDPs, specifications are taken over the best and worst possible resolution of the non-determinism.
Assume MDP $M = \mdpT$.
The maximal reachability probability of $G$ from state $s_0$ in $M$ is 
$\mathbb{P}_{\max}[M \models \F{G}] \coloneqq \sup_{\sched \in \schedulers} \mathbb{P}[\imc \models \F{G}]$.
We denote $V(\sched) \coloneqq \mathbb{P}[\imc \models \F{G}]$ as the \emph{value} of policy \sched.
An optimal policy that maximizes the value is denoted with \schedopt. The value of MDP is defined as $V(M) = V(\schedopt)$.
Details are given in~\cite{Baier2018}.

\subsection{Representing Policies as Decision Trees}
\label{sec:dt}

\paragraph{Symbolic MDPs.}
We aim to represent policies in MDPs symbolically.
Inspired by the PRISM language~\cite{KNP11}, we assume a finite set of bounded integer variables \variables,
and thus states are mappings $s \colon \variables \rightarrow \integers$.
\emph{State predicates} are inequalities of the form $v \leq b$ with $v \in \variables$ and $b \in \integers$; the set of such predicates is denoted $\predicates$.
A state $s$ \emph{satisfies a predicate} $v \leq b$ iff $s(v) \leq b$; we denote this with $s \models (v \leq b)$.

\paragraph{The random action. }
To concisely represent policies, it is convenient to allow a policy to take some dedicated \emph{arbitrary} action. We explicitly create this arbitrary action  $\actrandom$ for every state which uniformly selects one of the (available) actions. Formally, we define $M' = (S,\sinit,\Act',\mpm')$ with $\Act'(s)=\Act(s)\cup\{\actrandom\}$, $\mpm'(s,\actrandom,s') = \frac{1}{|\Act(s)|}\sum_{\act \in \Act(s)}\mpm(s,\act,s')$. 
Henceforth, we assume that MDP $M' = M$, i.e., that every MDP contains an action $\actrandom$.
This is sound for the specifications considered in this paper: it holds that the optimal value is achievable by a memoryless deterministic policy $\sigma^{*}$ as per~\cite[Thm 7.9.1]{Put94}. 
\begin{corollary} \emph{(Proof in App.~\ref{appendix:proof:random})}
\label{corollary:random}
Given an MDP $M$, it holds that $V(M)~=~V(M')$. 
\end{corollary}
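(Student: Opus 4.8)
The plan is to show both inequalities $V(M') \leq V(M)$ and $V(M) \leq V(M')$, which together yield the claimed equality. The easier direction is $V(M) \leq V(M')$: since $M'$ is obtained from $M$ by only \emph{adding} the fresh action $\actrandom$ to every state's action set while leaving all original transitions untouched, every policy $\sched \in \schedulers[M]$ is also a policy of $M'$, so the supremum defining $V(M') = \sup_{\sched \in \schedulers[M']} \mathbb{P}[M'^{\sched} \models \F{G}]$ is taken over a superset and hence $V(M') \geq V(M)$.

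For the reverse direction $V(M') \leq V(M)$, the key observation is the one flagged in the text: by \cite[Thm 7.9.1]{Put94} (equivalently \cite[Thm 7.9.1]{Put94} as cited) the optimal value of $M'$ is already attained by some memoryless deterministic policy $\sched^{*} \in \schedulers[M']$, i.e.\ $V(M') = \mathbb{P}[M'^{\sched^{*}} \models \F{G}]$. Now I would argue that $\sched^{*}$ can be assumed to never pick $\actrandom$ without loss of value. The action $\actrandom$ in state $s$ induces the distribution $\mpm'(s,\actrandom) = \frac{1}{|\Act(s)|}\sum_{\act \in \Act(s)} \mpm(s,\act)$, which is a convex combination of the distributions $\mpm(s,\act)$ for $\act \in \Act(s)$. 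Since the reachability value function $s \mapsto \prob[M'^{\sched^{*}}, s]{\F{G}}$ is fixed once we look at a state where $\sched^{*}(s)=\actrandom$ and consider a one-step improvement, the value obtained by playing $\actrandom$ is the average of the values obtained by playing the individual $\act \in \Act(s)$, and hence is at most $\max_{\act \in \Act(s)} \sum_{s'} \mpm(s,\act,s') \cdot \prob[M'^{\sched^{*}},s']{\F{G}}$. Replacing $\actrandom$ in state $s$ by such a maximizing original action does not decrease the value; iterating (or invoking the standard policy-improvement / optimality-equation argument directly) produces a memoryless deterministic policy $\sched' \in \schedulers[M]$ with $V(\sched') \geq V(M')$. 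Since $\sched'$ uses only original actions, $V(M) \geq V(\sched') \geq V(M')$.

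Combining the two inequalities gives $V(M) = V(M')$. The cleanest way to present the argument, rather than a hand-rolled policy-improvement induction, is to note that both $M$ and $M'$ satisfy the Bellman optimality equations for maximal reachability, and that these equations have the same least (or, for reachability probabilities, the relevant) fixed point because adding $\actrandom$ adds only a $\max$-candidate that is a convex combination of existing candidates and therefore never exceeds the running maximum; hence the optimal value vectors coincide, and in particular they agree in the initial state, i.e.\ $V(M) = V(M')$.

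The main obstacle is the subtlety in the reverse direction: one must be careful that "$\actrandom$'s value is the average of the other actions' values" uses the value function of the \emph{optimal} policy of $M'$ (or the optimal value vector), not an arbitrary one, so that the averaging-is-dominated-by-the-max step is valid. For expected reward and discounted reward objectives (which the footnote mentions the implementation also handles) the same convexity argument goes through verbatim via the corresponding Bellman operators; for indefinite-horizon expected reward one additionally needs the standing assumption that the target is reached almost surely under optimal play, but this is inherited from $M$ and unaffected by adding a convex-combination action.
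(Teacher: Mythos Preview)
Your proof is correct. Both you and the paper handle $V(M) \leq V(M')$ identically via action-set containment. For $V(M') \leq V(M)$, the paper takes a slightly more direct route: instead of derandomizing the optimal $M'$-policy via policy improvement, it observes that any $M'$-policy $\sched'$ can be converted into a \emph{randomized} policy $\sched_{\mathrm{rand}}$ for $M$ with exactly the same induced Markov chain (replace each choice of $\actrandom$ by the uniform distribution over $\Act(s)$), hence exactly the same value; Puterman's theorem then says this randomized policy cannot exceed the deterministic optimum $V(M)$, yielding a contradiction if $V(\sched') > V(M)$. Your approach replaces $\actrandom$ by an argmax original action and iterates a policy-improvement step, which is also sound but requires the extra care you correctly flag (working relative to the optimal value function so that the averaging-dominated-by-max step is legitimate). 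The paper's value-preserving transformation sidesteps that subtlety entirely; on the other hand, your Bellman-operator formulation at the end is the cleanest presentation and makes immediately transparent why adding \emph{any} convex-combination action---not just the uniform one---leaves the optimal value unchanged, which the paper's appendix also states but obtains less directly.
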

Any policy in MDP $M'$ can be mimicked by a (randomizing) policy in  $M$.

\begin{remark}[Interpretability of random action]
Adding the random action makes it explicit that a policy may randomly pick either available action. Sometimes, having this opportunity makes for more concise policies. A possible downside is that the policy in $M'$ may not reflect a memoryless deterministic policy in $M$. 

\end{remark}

\paragraph{Trees.}
A (binary) \emph{tree} is a tuple 
$T = (n_0,N,L,l,r)$  with the \emph{root node} $n_0$, the set~$N$ of \emph{inner nodes}, the set $L$ of \emph{leaf nodes}, and functions $l,r \colon N \rightarrow N\cup L $ defining the \emph{left} and \emph{right successors} of the inner nodes, respectively. 
A \emph{path} (of length $k$) in a tree~$T$ is a sequence $\pi = n_0\ldots n_k$ of nodes s.t.\ $\forall 0 {<} i {\leq} k: n_i \in \{l(n_{i-1}),r(n_{i-1})\}$.
Path $\pi$ is \emph{complete} if it ends in a leaf node.
The \emph{depth} of $T$ is the length of its longest path.
 
\begin{definition}[Decision tree]
Assume an MDP $M = \mdpT$ defined over the set \variables.
A \emph{decision tree} (DT) for $M$ is a tuple $\tree = \treeT$ where (i)~$T$ is a binary tree, (ii)~\emph{predicate function} $\gamma \colon N \rightarrow \predicates$ assigns to inner nodes a state predicate, and (iii)~\emph{action function} $\delta \colon L \rightarrow \Act$ assigns to leaf nodes an action. 
\end{definition}
We lift the notions of inner and leaf nodes, paths and depths of trees to DTs. DTs of depth $k$ are further denoted as $k$-DTs.

\begin{definition}[Corresponding states]
The set $\sts{\tree}{n}$ of states that corresponds to a node $n$ is recursively defined as follows: $\sts{\tree}{n_0} = S$, $\sts{\tree}{l(n)} = \{ s \in \sts{\tree}{n} \mid s \models \gamma(n) \}$, and $\sts{\tree}{r(n)} = \{ s \in \sts{\tree}{n} \mid s \not\models \gamma(n) \}$.
\end{definition}
Note that the sets $\{ \sts{\tree}{n} \mid n \in L\}$ represent a partition of $S$. Thus, we can define $\leaf(s)$ as the unique leaf node $n \in L$ such that $s \in \sts{\tree}{n}$.

\begin{definition}[Induced policy]
\label{def:induced-policy}
The DT $\tree$ \emph{induces policy} $\sched[\tree] \colon S \rightarrow \Act$ with $\sched[\tree](s) = \delta(\leaf(s))$ if $\delta(\leaf(s)) \in Act(s)$ and $\sched[\tree](s) = \actrandom$ otherwise. The \emph{value} $V(\tree)$ of the DT \tree is defined as the value of $\sched[\tree]$.
\end{definition}

\begin{example}
Consider an MDP presented in Fig.~\ref{fig:example:maze} and a DT $\tree_c$ depicted in Fig.~\ref{fig:example:maze:tree-2}.
Assume a state $s \in S$ with $s(x)=4$ and $s(y)=3$.
Then, $\leaf(s)$ is the rightmost node of $\tree_c$ and thus $\tree_c$ induces policy $\sched[{\tree_c}]$ where $\sched[{\tree_c}](s) = up$.
\end{example}

\begin{remark}[Fallback action interpretability]
    Our approach can synthesize a DT that assigns an action that is not available at a given state, in which case we use the random action as a fallback. This setup avoids that the DT must precisely capture when an action is available and allows for smaller DTs.
    Note that the information 
    which actions are available is usually also accessible in another format (e.g. masks or shields heavily used in reinforcement learning settings). 
\end{remark}

\subsection{Problem Statement}
This paper's key problem is to find a decision tree with maximum value (e.g. reachability probability) among all decision trees up to the given depth. 
\begin{mdframed}
\textbf{Bounded-depth synthesis problem}: Given MDP $M$ and bound $k$, find a DT \tree of depth up to $k$ with maximum value.
\end{mdframed}
Akin to~\cite{vos2023optimal}, we are interested in anytime synthesis: the faster we find a DT that achieves a high value, the better. This is a variant of widely studied a \emph{policy mapping problem}~\cite{gupta2015policy,brazdil2015counterexample,likmeta2020combining,topin2021iterative,ashok2021dtcontrol}: find a DT inducing a given fixed policy (typically a pre-computed policy that maximizes reachability). In contrast, in the bounded-depth synthesis problem, we seek the best DT up to the given depth.

\section{Solving the fixed-tree policy mapping problem}
\label{sec:policy-mapping}
In this section, we construct an SMT-based subroutine that solves a \emph{fixed-tree policy mapping problem}, a variant of the policy mapping that searches over a set of DTs having the same topology. Additionally, we consider the construction of a witness that explains why a policy cannot be represented as such a DT.
We start with \emph{tree templates} representing a blueprint of the set of~DTs.

\begin{definition}[Tree template]
\label{def:template}
Assume an MDP $M = \mdpT$ defined over the set \variables.
A \emph{tree template}~for $M$ is a tuple $\template = \templateT$ where (i)~$T$ is a binary tree, (ii)~$\Gamma \colon N \rightarrow 2^{\predicates}$ assigns to inner nodes a set of state predicates, and (iii)~$\Delta \colon L \rightarrow {2^{\Act}}$ assigns to leaf nodes subset of actions. 
\end{definition}
A tree template encodes (or instantiates) a set of DTs and, thus, a set of policies.
\vspace{-1em}
\begin{definition}[Instantitation]
Template $\template = \templateT$ \emph{instantiates} DT $\tree = \treeT$, written $\tree \in \template$, if $\gamma(n) \in \Gamma(n)$ for all $n \in N$ and $\delta(n) \in \Delta(n)$ for all $n \in L$.
A policy~\sched is \emph{contained} in template \template, written $\sched \in \template$, if there exists a tree $\tree \in \template$ that induces~\sched; in such a case, we say that \sched is \emph{\template-implementable} (via \tree).
\end{definition}

\vspace{-1em}
\subsection{SMT encoding}
\label{sec:smt}

Let's assume a fixed tree template \template.
In this subsection, we present an SMT formula over the theory of quantifier-free linear integer arithmetic that is satisfiable iff a (partial) policy \sched is \template-implementable.
To this end, we first reparameterize the decision trees and templates in terms of integer-valued variables.

\paragraph{Parameterization.}
We assume that the set of state variables $\variables = \{v_i\}_{i=1}^{|\variables|}$ is ordered and the actions are integers $\Act \subseteq \integers$ and are ordered.
Observe that any state predicate $v_i \leq b$ can be described by a pair of integers $i,b$.
Therefore, the predicate function $\gamma \colon N \rightarrow \predicates$ of a DT can be expressed using two integer-valued functions:
the \emph{variable selection function} $\decisionfun\colon N \rightarrow \{1, \dots, |\variables|\}$
and the \emph{bound selection function} $\boundfun \colon N \rightarrow \integers$.
Then, the predicate function $\gamma_{\decisionfun, \boundfun}$ associated with $\decisionfun,\boundfun$ is defined as $\gamma_{\decisionfun, \boundfun}(n) \coloneqq ( v_{\decisionfun(n)} \leq \boundfun(n))$.
Likewise, with the \emph{action selection function} $\actionfun \colon L \rightarrow \Act$, we can define action function $\delta_\actionfun$ where $\delta_\actionfun(n) \coloneqq \actionfun(n)$.

\begin{definition}[Parameterization]
\label{def:parameterization}
Given \template and functions \decisionfun, \boundfun and \actionfun as above. The \emph{parameterization} $f = (\decisionfun, \boundfun, \actionfun)$ is \emph{contained} in \template if the DT $(T, \gamma_{\decisionfun,\boundfun}, \delta_\actionfun) \in \template$; we write $\template(f)$ to denote this DT. The value $V(f)$ of $f$ is the value $V(\template(f))$.
\end{definition}
Towards an SMT encoding, we introduce the variables $\params = \{ \boundfun_n, \decisionfun_n \mid n \in N \} \cup \{ \actionfun_n \mid n \in L \}$ such that $\boundfun_n$ encodes the value of $\boundfun(n)$, etc.
We write $f(x)$ to denote the value of any variable $x \in \params$.

While parameterization $f$ allows us to generate a DT $\template(f)$ from a template, a \emph{set of parameterizations} allows us to generate restrictions of the template.
 
\begin{definition}[Parameterization set]
\label{def:family}
Given a template \template, a  (rectangular) \emph{set of parameterizations} is a function $\family \colon \params \rightarrow 2^{\integers}$ where, for any $n \in N$, it holds that $\family(\decisionfun_n) \subseteq \{1, \dots, |\variables|\}$, $\family(\boundfun_n) \subseteq \integers$ and for any $n \in L$, $\family(\actionfun_n) \subseteq \Act$.
Parameterization $f$ belongs to set \family, denoted $f \in \family$, if $f(x) \in \family(x)$ for any variable $x \in \params$. We write \ftemplate
for the template that contains exactly the DTs $\template(f)$ with $f \in \family$.
\end{definition}

\paragraph{Encoding.}
We present an encoding that checks for a given \family whether a (partial) policy \sched is \ftemplate-implementable.
In the encoding below, we assume that \sched is a partial policy where $\sched(s) = \bot$ if $s \in G$ or $s$ is unreachable in \imc.
The formula $\fpol(\family, \sched)$ uses the variables from $\params$ and abstractly is a conjunction of
(1)~$\fdom(\family)$, which limits possible parameterizations to \family, and
(2)~$\fact_{s,\sched(s)}$, which ensures that the tree represents \sched in every state.
\[ \fpol(\family, \sched) \coloneqq\quad \fdom(\family) \land \bigwedge_{s \in S, \sched(s) \neq \bot} \fact_{s,\sched(s)}.   \]
We explain the individual constraints below.

\paragraph{Contained in a template.}
$\fdom(\family)$ is a constraint that limits the variables to ensure that a DT is contained in \ftemplate

\[ \fdom(\family) \coloneqq \bigwedge\nolimits_{x \in \params} \, x \in \family(x) \]

\paragraph{State-representing leaf.}
Before we discuss expressing the correct policy, we match the leaves of the DT with states. 
Recall that a DT partitions the state space $S$ into $\{ \sts{\tree}{n} \mid n \in L\}$. In a tree template, this partitioning depends on the choice of $f$, i.e.~on the choice of \boundfun and \decisionfun. 
We create a formula $\fsel_{s,n}(f)$ that is true iff $s \in \sts{\template(f)}{n}$. 
First, for the unique path $n_0,\dots,n_k$ from root to leaf $n = n_k$, we define the relation $\meetsin$ such that
$\meetsin\, \coloneqq {\leq}$ if $n_{i+1} = l(n_i)$ 
and $\meetsin\, \coloneqq {>}$ if $n_{i+1} = r(n_i)$.
The formula $\fsel_{s,n}$ is then given by:
\begin{align*} \fsel_{s,n} \coloneqq\quad  \textstyle
 &\bigwedge\textstyle_{i=0}^{k-1} \bigwedge\textstyle_{j=1}^{|\variables|} \quad \big(j 
 = \decisionfun_{n_i}\big) \rightarrow \big(s(v_j)\meetsin \boundfun_{n_i}\big). 
\end{align*}

\paragraph{Action-choice} 
For the state $s$ to take action \act in a DT, it must hold that the (unique) leaf $n$ that represents state $s$ picks this action \act.
The randomized action \actrandom can also be picked when this leaf selects an unavailable action.

\vspace{-1em}
\begin{align*}
\fact_{s,\act} & \coloneqq \quad  \bigwedge_{n \in L}  \fsel_{s,n} \rightarrow \fact_{s,\act,n}, \qquad \text{using} \\
\fact_{s,\act,n} & \coloneqq
\begin{dcases*}
\actionfun_{n} = \act & if $\act \neq \actrandom$ \\
\actionfun_{n} \in \{\actrandom\} \cup \Act {\setminus} \Act(s)  & otherwise \\
\end{dcases*}    
\end{align*}

\paragraph{Correctness.}
A parameterization~$f$ satisfying $\fpol(\family,\sched)$ is denoted as $f \models \fpol(\family,\sched)$.

\begin{theorem} \emph{(Proof in App.~\ref{appendix:proof:encoding}.)}
\label{thm:smt}
Assume a parameterization $f$, a set of parameterizations \family, and a policy $\sched \in \schedulers$.
Then $f \models \fpol(\family,\sched)$ iff $f \in \family$ and forall $s \in S, \sched(s) \neq \bot $ implies $ \sched(s) = \sched[\template(f)]$.
\end{theorem}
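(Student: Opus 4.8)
The plan is to prove the biconditional by unfolding the definitions of the constituent formulas and matching them against the semantics of the induced policy from Definition~\ref{def:induced-policy}. The statement really has two parts: (i) $f \models \fdom(\family)$ iff $f \in \family$, which is immediate from the definition of $\fdom$ as $\bigwedge_{x \in \params} x \in \family(x)$ and the definition of $f \in \family$; and (ii) given $f \in \family$, the conjunction $\bigwedge_{s : \sched(s)\neq\bot} \fact_{s,\sched(s)}$ holds under $f$ iff for every such $s$ we have $\sched(s) = \sched[\template(f)](s)$. The bulk of the work is part (ii), and I would prove it statewise: fix $s$ with $\sched(s) \neq \bot$ and show $f \models \fact_{s,\sched(s)}$ iff $\sched(s) = \sched[\template(f)](s)$.

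The first key step is a lemma about $\fsel$: for the parameterization $f$ (which determines $\gamma_{\decisionfun,\boundfun}$ and hence the partition $\{\sts{\template(f)}{n}\}_{n\in L}$), we have $f \models \fsel_{s,n}$ iff $s \in \sts{\template(f)}{n}$, i.e.\ iff $n = \leaf[\template(f)](s)$. This follows by induction along the root-to-$n$ path $n_0,\dots,n_k$: the definition of $\sts{\template(f)}{\cdot}$ adds the literal $s \models \gamma_{\decisionfun,\boundfun}(n_i)$ (i.e.\ $s(v_{\decisionfun(n_i)}) \leq \boundfun(n_i)$) when $n_{i+1}=l(n_i)$ and its negation when $n_{i+1}=r(n_i)$; meanwhile the $i$-th conjunct of $\fsel_{s,n}$, namely $\bigwedge_j (j = \decisionfun_{n_i}) \rightarrow (s(v_j) \meetsin \boundfun_{n_i})$, evaluates under $f$ to exactly $s(v_{f(\decisionfun_{n_i})}) \meetsin f(\boundfun_{n_i})$ because only the $j = f(\decisionfun_{n_i})$ disjunct is non-vacuous, and $\meetsin$ was defined to be $\leq$ resp.\ $>$ in precisely the $l$ resp.\ $r$ cases. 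So the $i$-th conjunct of $\fsel_{s,n}$ is the same literal that the recursive definition of $\sts{}{}$ uses. Since the leaves partition $S$, there is exactly one $n$ with $f \models \fsel_{s,n}$, namely $\leaf[\template(f)](s)$.

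The second key step combines this with $\fact$. Since $\fact_{s,\act} = \bigwedge_{n\in L} (\fsel_{s,n} \rightarrow \fact_{s,\act,n})$ and exactly one $\fsel_{s,n}$ is true under $f$, the formula $\fact_{s,\sched(s)}$ holds under $f$ iff $\fact_{s,\sched(s),n^*}$ holds under $f$ where $n^* = \leaf[\template(f)](s)$. Now I do a case split on whether $\sched(s) = \actrandom$. If $\sched(s) \neq \actrandom$: $f \models \fact_{s,\sched(s),n^*}$ means $f(\actionfun_{n^*}) = \sched(s)$, i.e.\ $\delta_{\actionfun}(n^*) = \sched(s)$; and since $\sched(s) \in \Act(s)$ (as $\sched \in \schedulers$ is a genuine policy) we have by Definition~\ref{def:induced-policy} that $\sched[\template(f)](s) = \delta_\actionfun(\leaf[\template(f)](s)) = \delta_\actionfun(n^*)$, so the two conditions coincide. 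If $\sched(s) = \actrandom$: then actually $\sched(s) \in \Act(s)$ still holds by our convention that every MDP contains $\actrandom$, but the point is that $f \models \fact_{s,\actrandom,n^*}$ means $f(\actionfun_{n^*}) \in \{\actrandom\} \cup \Act \setminus \Act(s)$, which by Definition~\ref{def:induced-policy} is exactly the condition under which $\sched[\template(f)](s) = \actrandom = \sched(s)$. Quantifying over all $s$ with $\sched(s) \neq \bot$ and reattaching $\fdom(\family)$ gives the theorem.

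The main obstacle is bookkeeping rather than any deep difficulty: one must be careful that $\fsel$ correctly picks out a \emph{single} leaf (which relies on the DT partition property noted after Definition~\ref{def:induced-policy}, so that the big conjunction over $n \in L$ in $\fact_{s,\act}$ collapses to the one relevant leaf), and one must handle the two branches of the $\fact_{s,\act,n}$ case distinction in lockstep with the two branches of the induced-policy definition — in particular noticing that the $\actrandom$ branch of $\fact$ permits both a literal $\actrandom$ choice and any unavailable action, matching the fallback semantics. A minor subtlety worth stating explicitly is that the hypothesis $\sched \in \schedulers$ (rather than a partial policy) guarantees $\sched(s) \in \Act(s)$, which is what lets the non-$\actrandom$ case go through cleanly; the partiality only enters through the guard $\sched(s) \neq \bot$ restricting which states are constrained.
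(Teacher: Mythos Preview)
Your proposal is correct and follows essentially the same approach as the paper's own proof: separate off the $\fdom$ part (the paper packages this as a lemma), work statewise, argue that $\fsel_{s,n}$ evaluated under $f$ singles out exactly the leaf $n^*=\leaf[\template(f)](s)$, and then do the case split on whether $\sched(s)=\actrandom$ to match $\fact_{s,\sched(s),n^*}$ against Definition~\ref{def:induced-policy}. The only cosmetic difference is that the paper factors the ``$\delta_\actionfun(n^*)=f(\actionfun_{n^*})$ and fallback to $\actrandom$'' reasoning into a separate instantiation lemma (Lemma~\ref{lemma:instantiation}), whereas you unfold the induced-policy definition directly; your explicit note that $\sched\in\schedulers$ guarantees $\sched(s)\in\Act(s)$ in the non-$\actrandom$ branch is a useful clarification the paper leaves implicit.
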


\vspace{-1.5em}
\subsection{Diagnosing non-implementable policies.}
\label{sec:harmonization}
\subsubsection{Unsatisfiable cores.}
For further usage, we want to examine why $\fpol(\family,\sched)$ is unsatisfiable. The conjunction $\fpol(\family,\sched)$ is essentially a set of domain constraints and a set of action-choice constraints. 
An \emph{unsatisfiable core}~\cite{biere2009handbook} is a natural way of explaining why $\fpol(\family,\sched)$ cannot be satisfied by using a subset of the constraints.
The computation of unsatisfiable cores is supported by existing SMT solvers, e.g.~by Z3~\cite{z3}. An unsatisfiable core $\unsatcore$ is simply a subset of parameters and state-leaf constraints that make $\fpol(\family,\sched)$ unsatisfiable.

\begin{definition}[Unsatisfiable core]
\label{def:unsat-core}
An unsatisfiable core for \family and \sched is a pair $\unsatcore = (\paramsuc \subseteq \params,\statepathuc \subseteq S \times L)$ s.t.\ the following formula is unsatisfiable
\[ \fpol^{\unsatcore}(\family,\sched) \coloneqq \bigwedge_{x \in \paramsuc} x \in \family(x) \land \bigwedge_{s,n \in \statepathuc}\fact_{s,\sched(s), n}.
\]
The set of \emph{critical states} for \family and \sched is $\criticalstates \coloneqq \{ s \in S \mid \exists n: (s,n) \in \statepathuc \}$.
\end{definition}
In our experiments, \statepathuc is often much smaller than $S \times L$: In fact, it is not uncommon that \statepathuc contains only two elements.

Given a subset $S' \subseteq S$ and a policy \sched, we define the \emph{restriction}~$\sched[\downarrow S']$ of \sched on $S'$ as $\sched[\downarrow S'](s) = \sched(s)$ if $s \in S'$ and $\sched[\downarrow S'](s)=\bot$ otherwise.
The following proposition asserts that the set of critical states witnesses the non-implementability of~\sched.
\begin{proposition}
Let \criticalstates be critical states for \family and \sched.  The partial policy \criticalsched is not \ftemplate-implementable. 
\end{proposition}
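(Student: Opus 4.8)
The plan is to argue by contraposition: I would show that if the restricted policy $\criticalsched$ \emph{were} $\ftemplate$-implementable, then the formula $\fpol^{\unsatcore}(\family,\sched)$ would be satisfiable, contradicting the defining property of the unsatisfiable core (Def.~\ref{def:unsat-core}). So suppose $\criticalsched$ is $\ftemplate$-implementable; by definition there is a parameterization $g \in \family$ such that $\template(g)$ induces a DT whose induced policy agrees with $\criticalsched$ on its domain, i.e.\ on $\criticalstates$. By Theorem~\ref{thm:smt} applied to the policy $\criticalsched$ (which is a partial policy with $\criticalsched(s)\neq\bot$ exactly when $s\in\criticalstates$), this means $g \models \fpol(\family,\criticalsched)$.

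The next step is to observe that $\fpol(\family,\criticalsched)$ logically implies $\fpol^{\unsatcore}(\family,\sched)$, so that $g$ satisfies the latter as well. Concretely, $\fpol^{\unsatcore}(\family,\sched)$ is the conjunction of (i) $x \in \family(x)$ for $x \in \paramsuc \subseteq \params$, and (ii) $\fact_{s,\sched(s),n}$ for $(s,n) \in \statepathuc$. For part (i): $g \in \family$ gives $g(x)\in\family(x)$ for \emph{all} $x\in\params$, hence in particular for $x\in\paramsuc$. For part (ii): by definition of $\criticalstates$, every $s$ with $(s,n)\in\statepathuc$ lies in $\criticalstates$, so $\criticalsched(s) = \sched(s) \neq \bot$; the constraint $\fact_{s,\criticalsched(s)}$ is a conjunct of $\fpol(\family,\criticalsched)$, and since $\fact_{s,\act}$ is itself the conjunction over all leaves $n\in L$ of $\fsel_{s,n}\rightarrow\fact_{s,\act,n}$, and $\fsel_{s,n}$ holds for exactly one leaf $n$ under $g$, we get that $g$ satisfies $\fact_{s,\criticalsched(s),n}=\fact_{s,\sched(s),n}$ for the relevant leaf. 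This requires a small care: the pair $(s,n)\in\statepathuc$ may refer to a leaf $n$ that is \emph{not} the one $g$ routes $s$ to, in which case one needs the implication to be vacuously true; I would handle this by noting that $\fact_{s,\sched(s),n}$ is the \emph{consequent} in $\fsel_{s,n}\rightarrow\fact_{s,\act,n}$, so if $g$ does not route $s$ to $n$ then this pair simply imposes no constraint — but then the unsatisfiable core contains a vacuous conjunct, and we may instead invoke the (standard) minimality/soundness of unsatisfiable cores to assume $g$ does route $s$ to that $n$, or simply observe that a weaker core is still a valid one. Either way, $g \models \fpol^{\unsatcore}(\family,\sched)$.

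This contradicts the assumption that $\unsatcore$ is an unsatisfiable core, namely that $\fpol^{\unsatcore}(\family,\sched)$ is unsatisfiable. Hence $\criticalsched$ is not $\ftemplate$-implementable. I expect the main obstacle to be the bookkeeping in the second paragraph: making precise how the "leaf" component $n$ in the pair $(s,n)\in\statepathuc$ interacts with the leaf to which a satisfying $g$ actually assigns $s$, i.e.\ ensuring that the implication structure $\fsel_{s,n}\rightarrow\fact_{s,\act,n}$ inside $\fact_{s,\act}$ is correctly tracked so that the core constraint $\fact_{s,\sched(s),n}$ is indeed entailed. Everything else is a routine unwinding of the definitions plus one application of Theorem~\ref{thm:smt}.
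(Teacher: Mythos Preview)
Your contrapositive strategy is the natural one and is almost certainly what the authors have in mind (the paper states the proposition without proof). The problem is exactly the step you flag, and none of your three proposed patches closes it. Under the literal reading of Definition~\ref{def:unsat-core}, the core conjunct is the bare consequent $\fact_{s,\sched(s),n}$, which is a genuine constraint on $\actionfun_n$ regardless of where $g$ routes $s$; it is not ``vacuously true'' when $g$ sends $s$ elsewhere (that is a property of the \emph{implication}, which is not what the core contains as written). Minimality of unsat cores is about dropping conjuncts from the core while staying unsatisfiable; it says nothing about which leaf a hypothetical satisfying assignment of a \emph{different} formula routes $s$ to. And ``a weaker core is still valid'' points the wrong way: you need the core formula to be \emph{implied by} $\fpol(\family,\criticalsched)$, but replacing implications by their consequents makes the core \emph{stronger}, not weaker.

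Indeed, taken literally the definition admits pathological cores that break the proposition: pick a family where leaf $n$ has $\family(\actionfun_n)=\{b\}$, and a state $s$ with $\sched(s)=a$ that is never routed to $n$ under any $f\in\family$. Then $\statepathuc=\{(s,n)\}$, $\paramsuc=\{\actionfun_n\}$ yields an unsatisfiable $\fpol^{\unsatcore}$, yet $\sched_{\downarrow\{s\}}$ is implementable via the leaf $s$ actually reaches.

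The fix is not in your argument but in the reading of Definition~\ref{def:unsat-core}: an unsat core is supposed to be a subset of the conjuncts of $\fpol(\family,\sched)$, and at the $(s,n)$ granularity those conjuncts are the implications $\fsel_{s,n}\rightarrow\fact_{s,\sched(s),n}$, not their consequents. Read that way, your argument goes through with no patches at all: $\fpol(\family,\criticalsched)=\fdom(\family)\land\bigwedge_{s\in\criticalstates}\bigwedge_{n\in L}(\fsel_{s,n}\rightarrow\fact_{s,\sched(s),n})$ contains every domain conjunct (hence those in $\paramsuc$) and, for each $(s,n)\in\statepathuc$, the implication indexed by $(s,n)$ (since $s\in\criticalstates$ by definition). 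So satisfiability of the former immediately contradicts unsatisfiability of the core. Keep the argument, drop the three patches, and state explicitly that you read the core as a subset of the implication-level conjuncts of $\fpol$.
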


\begin{remark}
\label{remark:custom-unsat-core}
$\fpol(\family,\sched)$ can have multiple unsatisfiable cores and thus multiple sets of critical states. We prefer critical states where the policy choice also critically affects the value, similar to~\cite{brazdil2015counterexample}. 
Assume that the state space $S =\{s_i\}_{i=1}^{|S|}$ is ordered.
Instead of deducing critical states from $\fpol(\family,\sched)$, we deduce them from $\fdom(\family) \land \bigwedge_{i = 1}^m \fact_{s_i,\sched(s_i)}$ where $m$ is the smallest state index that makes the formula unsatisfiable. 
We use a breadth-first search from the initial state\footnote{We tested various other orders, but this one seemed to be the most robust. 
}.
\end{remark}

\vspace{-0.5em}
\subsubsection{Harmonising parameterizations.}
The set of critical states \criticalstates identifies which state decisions $\sched(s)$, 
$s \in \criticalstates$, are incompatible with \ftemplate, i.e. make \criticalsched not \ftemplate-implementable.
To gain another insight into this incompatibility, we reverse the question and ask what makes \ftemplate incompatible with \criticalsched.
Since there is no one parameterization $f \in \family$ s.t. $\sched[\template(f)] = \sched$, we seek two parameterizations $f_1,f_2$ that differ in only one variable assignment and where policies $\sched[\template(f_1)],\sched[\template(f_2)]$ \emph{together} describe \sched on the set \criticalstates.
We now formalize this concept.

\begin{definition}[Harmonizing parameterizations]
\label{def:harmonization}
Let \sched be a policy that is not \ftemplate-implementable. Parameterizations $f_1,f_2 \in \family$ are called \emph{harmonizing} for \sched if $\exists!\, x \in \params : f_1(x) \neq f_2(x)$ and for any state $s \in \criticalstates$, $\sched(s) \in \{\sched[\template(f_1)](s),\sched[\template(f_2)](s)\}$. The variable $x$ above is \emph{the harmonizing variable}.
\end{definition}

\paragraph{Encoding.} We provide an SMT encoding that yields harmonizing parameterizations. Intuitively, we seek two trees, one from the original tree template \template (encoded with the original variables $\params$), and the other from the identical tree template $\template'$ (encoded with the primed variables $\params[\template']$). 
We first describe the requirement for the two trees to differ in at most one variable, using the fact that $\params$ and $\params[\template']$ are ordered similarly and using an auxiliary integer variable $h$ to encode the index of the harmonizing variable:
\vspace{-.2em}
\[\fharm(\family) \coloneqq \bigwedge\nolimits_{1 \leq i \leq |\params|}\, h \neq i \,\rightarrow\, x_i = x'_i. \]
The primed variables are associated with the identical set $\family'$ of parameterizations: $\family'(x') = \family(x)$. Then, the encoding is given by:
\vspace{-.2em}
\[\fpol^H(\family,\sched) \coloneqq \fdom(\family) \land \fdom(\family') \land \fharm(\family) \land \bigwedge\nolimits_{s \in \criticalstates} \left( \fact_{s, \sched(s)} \lor \fact'_{s,\sched(s)} \right)  \]
using $\fdom$ as above and $\fact'_{s,\sched(s)}$ is identical to $\fact_{s,\sched(s)}$ except every occurrence of variables $\decisionfun_n,\boundfun_n,\actionfun_n$ is replaced with their primed counterpart.

\begin{theorem} \emph{(Proof in App.~\ref{appendix:proof:harmonization}.)}
\label{thm:harmonization}
Assume parameterizations $f_1,f_2$, two (identical) sets of parameterizations $\family,\family'$, a policy $\sched$ that is not $\template(\family)$-impelementable, and an evaluation $f_h \in \integers$ of the variable $h$.
Then $f_h,f_1,f_2 \models \fpol^H(\family,\sched)$ iff $f_1,f_2$ are harmonizing for $\sched$ via a harmonizing variable $x_{f_h} \in \params$.
\end{theorem}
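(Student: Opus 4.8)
The plan is to unfold $\fpol^H(\family,\sched)$ into its four conjuncts and match them one by one against the two requirements of Definition~\ref{def:harmonization} — that $f_1$ and $f_2$ differ in exactly one variable, and that $\sched[\template(f_1)]$ and $\sched[\template(f_2)]$ jointly cover $\sched$ on $\criticalstates$ — using Theorem~\ref{thm:smt} as the link between the action-choice subformulas and the policies induced by parameterizations. Concretely, I fix the assignment that sends $h$ to $f_h$, the variables of $\params$ to the values prescribed by $f_1$, and the primed variables of $\params[\template']$ to those prescribed by $f_2$, and show this assignment models $\fpol^H(\family,\sched)$ exactly when $f_1,f_2$ are harmonizing via $x_{f_h}$.

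A preliminary step I would isolate is a \emph{per-state} reading of Theorem~\ref{thm:smt}: for any $f\in\family$ and any $s\in S$ (recall $\sched(s)\in\Act(s)$ since $\sched$ is a policy and $\actrandom\in\Act(s)$), one has $f\models\fact_{s,\sched(s)}$ iff $\sched[\template(f)](s)=\sched(s)$. This is immediate by applying Theorem~\ref{thm:smt} to the partial policy that agrees with $\sched$ on $\{s\}$ and is $\bot$ elsewhere, whose encoding is precisely $\fdom(\family)\land\fact_{s,\sched(s)}$; alternatively it comes directly out of the definitions of $\fsel$ and $\fact_{s,\act,n}$, since $\fsel$ singles out the unique leaf of $\template(f)$ containing $s$. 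The same equivalence holds verbatim for the primed copy, because $\fact'_{s,\sched(s)}$ is $\fact_{s,\sched(s)}$ with every parameter renamed to its primed version and $\family'$ is the identical set $\family'(x')=\family(x)$; hence $f_2$, read on the primed variables, models $\fact'_{s,\sched(s)}$ iff $\sched[\template(f_2)](s)=\sched(s)$.

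For the direction ``$\Rightarrow$'' I would read off: $\fdom(\family)$ and $\fdom(\family')$ give $f_1\in\family$ and $f_2\in\family'=\family$; $\fharm(\family)$ with $h=f_h$ forces $f_1(x_i)=f_2(x_i)$ for all $i\neq f_h$, so $f_1$ and $f_2$ agree on $\params$ except possibly at $x_{f_h}$; and the last conjunct, together with the per-state equivalence, yields $\sched(s)\in\{\sched[\template(f_1)](s),\sched[\template(f_2)](s)\}$ for every $s\in\criticalstates$. The remaining gap is upgrading ``differ in at most one variable'' to ``differ in exactly one variable'', i.e.\ obtaining the $\exists!$ of Definition~\ref{def:harmonization}: if $f_1=f_2$ held, the previous sentence would say $\sched(s)=\sched[\template(f_1)](s)$ for all $s\in\criticalstates$, so $\template(f_1)\in\ftemplate$ would induce a policy agreeing with $\sched$ on all of $\criticalstates$, i.e.\ $\criticalsched$ would be $\ftemplate$-implementable, contradicting the preceding proposition that the critical states witness non-implementability. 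Hence $f_1\neq f_2$, they differ precisely at $x_{f_h}$ (so in particular $1\le f_h\le|\params|$), and $f_1,f_2$ are harmonizing for $\sched$ with harmonizing variable $x_{f_h}$.

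For the direction ``$\Leftarrow$'' I take $f_1,f_2\in\family$ harmonizing via $x_{f_h}$ and verify each conjunct under the assignment above: $\fdom(\family)$ from $f_1\in\family$, $\fdom(\family')$ from $f_2\in\family=\family'$, $\fharm(\family)$ because $f_1(x_i)=f_2(x_i)$ for $i\neq f_h$ while the $i=f_h$ implication has a false antecedent, and each disjunct $\fact_{s,\sched(s)}\lor\fact'_{s,\sched(s)}$ from the per-state equivalence applied to whichever of $f_1,f_2$ realizes $\sched(s)$ at $s$. I expect the only genuinely non-mechanical point to be the $\exists!$ argument, where the earlier proposition on critical states is essential; a secondary care point is the corner case where $f_h$ lies outside $\{1,\dots,|\params|\}$ (then $\fharm$ forces $f_1=f_2$, excluded by the same argument), and ensuring the per-state correctness claim is only invoked for parameterizations that $\fdom$ already pins inside $\family$.
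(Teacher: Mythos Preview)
Your proposal is correct and follows essentially the same route as the paper's proof: both reduce $\fdom(\family)\land\fdom(\family')$ to $f_1,f_2\in\family$, both invoke Theorem~\ref{thm:smt} per state to handle the $\fact$/$\fact'$ disjuncts, and both use the non-implementability of $\sched$ on $\criticalstates$ to rule out $f_1=f_2$ and thereby upgrade ``differ in at most one variable'' to ``differ in exactly one''. Your treatment is arguably a little cleaner in that you do not pretend the $\fharm$ equivalence and the action-choice equivalence are independent (the paper's decomposition into two standalone iff's is slightly loose, since the $\exists!$ part of the first one silently relies on the second), and you explicitly address the out-of-range $f_h$ corner case, but these are presentational refinements rather than a different argument.
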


\color{red!50!black}

\color{black}

\vspace{-1em}
\section{Fixed-tree Synthesis Problem}
\label{sec:fixed-tree-synthesis}

We now shift our focus to the fixed-tree synthesis problem: given template~\template, we seek for a DT $\tree \in \template$ with the highest value $V(\tree)$.
We first show the NP-hardness of this problem. Afterwards, we outline our abstraction refinement approach.

\subsection{Problem Complexity}

The complexity of fixed-tree synthesis problem follows from the result of~\cite{laurent1976constructing}, which showed that finding optimal decision trees (in terms of size) is NP-complete.

\begin{theorem} \emph{(Proof in App.~\ref{appendix:thm:np-hardness}.)}
\label{thm:np-hardness}
The decision variant of the fixed-tree synthesis problem is NP-hard.
\end{theorem}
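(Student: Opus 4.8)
The plan is to reduce from a known NP-hard decision-tree construction problem. The cited result of Laurent and Rivest~\cite{laurent1976constructing} establishes NP-completeness of the following: given a set of labelled boolean points (samples) $(\mathbf{b}, c)$ with $\mathbf{b} \in \{0,1\}^d$, $c \in \{0,1\}$, and a bound $m$, decide whether there is a decision tree with at most $m$ nodes (over single-variable tests) that correctly classifies every sample. I would first pick this as the source problem, and note that it can be restated in a ``fixed topology'' form: since the full binary tree of depth $d$ has at most $2^{d+1}-1$ nodes and any consistent tree can be padded, it suffices to decide, for a given fixed tree $T$, whether the test-functions and leaf-labels can be chosen so that the induced classifier is consistent with the samples. (If one prefers to avoid the exponential blow-up of a padded topology, one uses instead a depth-bounded variant or a chain-of-decisions gadget; I would remark on this but keep the clean version.)

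The core of the reduction is to encode classification consistency as a reachability-maximization instance. Given the sample set, I would build an MDP $M$ over state variables $\variables = \{v_1,\dots,v_d\}$ together with one extra ``label'' coordinate. For each sample $(\mathbf{b},c)$ create a state $s_{\mathbf{b}}$ with $s_{\mathbf{b}}(v_i) = b_i$, having two actions, $a_0$ and $a_1$. From $s_{\mathbf{b}}$, action $a_c$ leads (deterministically, or with probability $1$) to the single goal state in $G$, and action $a_{1-c}$ leads to a sink. Thus a policy achieves value $1$ (the maximum, since from the initial state one can always route through the states and every state's correct action reaches the goal) if and only if, for every sample, it selects action $a_c$ in $s_{\mathbf{b}}$ — i.e.\ the policy ``classifies'' each sample correctly. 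I would wire an initial state that, via the random action or a deterministic chain, visits (or can visit) all sample-states, so that $V(\sigma)=1$ forces correctness on all of them; alternatively, set up independent copies so that $V$ is the fraction of correctly classified samples and ask for $V \geq 1$. The template $\template$ is taken to be the ``free'' template over the fixed topology $T$: $\Gamma(n) = \predicates$ restricted to the relevant variables and bounds $\{0\}$ (so a test on $v_i$ is exactly ``$v_i \le 0$'', which over $\{0,1\}$-valued coordinates is the boolean literal $\neg v_i$), and $\Delta(\ell) = \{a_0, a_1\}$. Then a DT $\tree \in \template$ with $V(\tree) = 1$ (equivalently $\geq 1$) exists iff the original decision-tree instance is a yes-instance, because the induced policy $\sigma_\tree$ reads exactly the boolean test-structure of $\tree$ at each sample point and the leaf action is precisely the predicted label. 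Membership of the construction in polynomial time, and the translation of decision-tree test functions to the single predicate class $v_i \le 0$, are routine.

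I would structure the write-up as: (1) state the source problem and its NP-hardness with the fixed-topology reformulation; (2) give the MDP and template construction with a picture-free but explicit description of states, actions, transitions, and $G$; (3) prove the two directions of correctness — a consistent tree yields a value-$1$ DT in $\template$, and a value-$1$ DT in $\template$ yields a consistent tree, using Definition~\ref{def:induced-policy} to relate leaf actions to classifications and using that over $\{0,1\}$ coordinates the predicates $v_i \le b$ collapse to boolean literals; (4) observe the reduction is polynomial. The main obstacle I anticipate is the topology-vs-size mismatch: Laurent--Rivest bound the number of nodes, whereas our problem fixes a single topology $T$. Handling this cleanly — either by arguing any size-$m$ consistent tree embeds into a canonical padded topology of polynomial size (true only if $m$ is given in unary or one restricts to bounded-depth instances, which are still NP-hard), or by instead reducing from an already-fixed-shape NP-hard variant — is the delicate point and the place where I would spend the most care; everything downstream (the MDP gadget, the predicate collapse, the value argument) is mechanical.
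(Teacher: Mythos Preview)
Your MDP gadget---encode each sample's required label as the unique action that reaches the goal, so that a value-$1$ policy must classify every sample correctly---is essentially the paper's construction (the paper uses a deterministic chain rather than a fan-out from $s_0$, but the idea is the same). The gap is exactly where you locate it: the topology-versus-size mismatch. You try to black-box Laurent--Rivest's size-bounded result and then coerce a node-count bound into a fixed template, but none of the escapes you list actually closes. Padding to a full tree of depth $d$ (the number of features) gives a template of size $2^d$, so the reduction is not polynomial; ``unary $m$'' changes the source problem; and ``an already-fixed-shape NP-hard variant'' is precisely the thing that needs to be exhibited, not cited.

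The paper resolves this by going one step further back and reducing directly from X3C (which is what Laurent--Rivest themselves reduce from). The MDP is a chain $1 \to 2 \to \cdots \to n{+}4$ over the elements of $U$ plus three padding states and a goal, where state $i$ has a unique correct action $i$. The key design choice you are missing is the \emph{state variables}: one boolean indicator $T_i$ per 3-set in the X3C instance (true in state $s$ iff $s \in T_i$), together with one identity indicator per state. A depth-$k$ tree with $k = |U|/3 + 2$ can then separate all $n{+}3$ non-goal states iff its first $|U|/3$ levels partition $U$ into triples using the $T_i$ predicates---which is exactly an exact cover---with the remaining two levels using identity variables to distinguish the three elements inside each triple and the three padding states. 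This is the ``fixed-shape'' source you were reaching for; rewrite your argument around the X3C reduction rather than trying to patch the black-box route.
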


Our proof follows similar principles to the proof in~\cite{laurent1976constructing}. Here we provide the main idea of the reduction, for full proof refer to App.~\ref{appendix:thm:np-hardness}. The proof is a reduction from the Exact Cover by 3-sets (X3C) problem to the problem of deciding whether there exists a decision tree of depth $k$ for MDP $M$ that implements a policy that reaches a goal state with probability above $0.5$. The X3C problem is a known NP-Complete problem which asks whether for some set $U$ and set $T = \{T_1,T_2,\dots,T_j\}$ where $|T_{i}|=3$ and $T_{i} \subseteq U$ there exists an exact cover $T' \subseteq T$. The main idea is to create an MDP where the states correspond to the elements of the set $U$ for X3C, such that in every state of the MDP exactly one unique action progresses towards the goal state $n+4$ while the rest of the actions lead to a sink state as shown in Fig.~\ref{fig:x3c-to-mdp}. 
This means that a DT that reaches a goal state with probability above 0.5 needs to distinguish all states of the MDP. 
There are two types of state variables in this MDP: 1) those that correspond to the sets in $T$ which have a value of 1 when $x \in T_i$ and 2) variables for each state which have a value of 1 only in one unique state. In the App.~\ref{appendix:thm:np-hardness} we show that for such an MDP there exists a DT of depth $k = |U|/3+2$ that reaches the goal state if and only if there exists an exact cover $T'$.

\begin{figure}[t]
    \centering
    \includegraphics[width=0.6\textwidth]{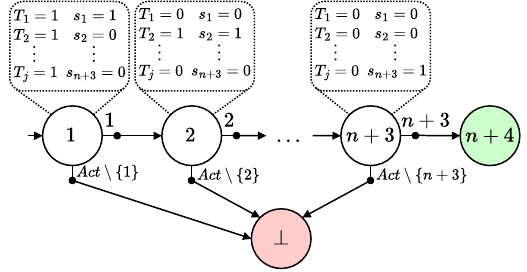}
    \vspace{-1em}
    \caption{MDP construction for the reduction from X3C problem.}
    \label{fig:x3c-to-mdp}
\end{figure}

\subsection{Abstraction Refinement}
\label{sec:refinement}

Due to the finiteness of the state space $S$ and the variable set \variables, the number of variable valuations is finite as well.
Thus, we consider a finite number of bound selection functions: $\boundfun(n) \in \{s(v) \mid s \in S, v \in \variables\}$.
We use \superfamily to denote the largest finite set of such parameterizations for \template.
The following definition introduces an abstraction that overapproximates the set of policies induced by described by \family.

\begin{definition}[Family-MDP]
\label{def:family-mdp}
Assume an MDP $M = \left(S,\sinit,\Act^M,P\right)$, a tree template \template and a family $\family \subseteq \superfamily$. An \family-MDP is an MDP $M(\family) = \left(S,\sinit,\Act,P'\right)$ where $P'(s,\act) = P(s, \act)$ if $\exists f \in \family: \sched[\template(f)](s) = \act$ and $P(s,\act) = \bot$ otherwise.
\end{definition}

An \family-MDP $M(\family)$ is a sub-MDP of $M$ 
where action \act is enabled in state $s$ only if at least some DT $\template(f)$, $f\in \family$, induces a policy that associates $s$ with \act.
We create this family-MDP by checking the satisfiability of $\fdom(\family) \land \fact_{s,\act}$ for every state-action pair.
The following proposition asserts that the set of policies for $M(\family)$ includes all policies obtained from \template using assignments from \family. 
\begin{proposition}
\label{proposition:family-mdp}
Let $\family \subseteq \superfamily$. Then $\forall f \in \family: \sched[\template(f)] \in \Sigma^{M(\family)}$.
\end{proposition}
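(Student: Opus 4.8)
The plan is to unfold the definitions of $\schedulers[M(\family)]$ and of the \family-MDP, and to observe that $f$ itself serves as the witness showing that every action prescribed by $\sched[\template(f)]$ is enabled in $M(\family)$. Concretely, recall that a deterministic memoryless policy belongs to $\schedulers[M(\family)]$ exactly when it is a function $\sigma \colon S \to \Act$ such that $\sigma(s)$ is an available action of $M(\family)$ at $s$, i.e.\ $P'(s,\sigma(s)) \neq \bot$, for every $s \in S$. So it suffices to fix an arbitrary $f \in \family$ and an arbitrary $s \in S$ and to prove that $\act \coloneqq \sched[\template(f)](s)$ satisfies $P'(s,\act) \neq \bot$.

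First I would argue that $\act \in \Act(s)$, hence $P(s,\act) \neq \bot$. By Def.~\ref{def:induced-policy}, $\act$ equals $\delta(\leaf[\template(f)](s))$ whenever this action is available in $s$, and equals $\actrandom$ otherwise; since we assume $M = M'$ (cf.\ Cor.~\ref{corollary:random}), $\actrandom$ is available in every state, so in both cases $\act \in \Act(s)$.

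Second, by Def.~\ref{def:family-mdp}, $P'(s,\act) = P(s,\act)$ precisely when there exists some $g \in \family$ with $\sched[\template(g)](s) = \act$. Taking $g \coloneqq f$ provides exactly such a witness, because $f \in \family$ by assumption and $\sched[\template(f)](s) = \act$ by the choice of $\act$. Consequently $P'(s,\act) = P(s,\act) \neq \bot$, so $\act$ is available in $M(\family)$ at $s$. Since $s$ was arbitrary, $\sched[\template(f)]$ is a total function $S \to \Act$ picking an available action in each state of $M(\family)$, i.e.\ $\sched[\template(f)] \in \schedulers[M(\family)]$; and since $f$ was arbitrary, the claim follows.

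I do not expect a genuine obstacle here: the statement is essentially a direct consequence of the definitions, with the proof reducing to the remark that $f$ is its own witness in Def.~\ref{def:family-mdp}. The only points needing mild care are (i) the fallback to $\actrandom$ in the definition of the induced policy, which is absorbed by the standing assumption $M = M'$, and (ii) the implicit well-definedness of $M(\family)$ as an MDP — for any nonempty $\family$ every state retains at least the actions used by the DTs $\template(f)$, $f \in \family$, so no state is left without an enabled action (and for $\family = \emptyset$ the proposition is vacuous).
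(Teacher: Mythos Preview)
Your proof is correct; the paper states the proposition without proof, treating it as immediate from Def.~\ref{def:family-mdp}, and your argument is precisely the natural unfolding of that definition together with Def.~\ref{def:induced-policy}. The two mild points you flag (the $\actrandom$ fallback and nonemptiness of $\Act_{M(\family)}(s)$) are the only subtleties, and you handle them correctly.
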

We stress that $M(\family)$ is an abstraction, i.e.~$\Sigma^{M(\family)}$ may contain policies that are not \ftemplate-implementable.

\setlength{\textfloatsep}{10pt}
\begin{algorithm}[t]
\DontPrintSemicolon
\SetKwFunction{findRobust}{\textsc{findRobust}}
\SetKwFunction{testUnsat}{\textsc{testUnsat}}
\SetKwFunction{buildTree}{\textsc{buildTree}}
\SetKwInOut{Input}{Input}
\SetKwInOut{Output}{Output}
\SetKw{Continue}{continue}
\SetKw{Yield}{yield}
\SetKwComment{Comment}{$\triangleright$\ }{}
\Input{MDP $M$, goal set $G$, tree template \template}
\Output{DT $\template(f)$ s.t.~$f \in \mathrm{arg\,max}_{f \in \superfamily}V(f)$ }
\BlankLine
\SetKwProg{myfunction}{Function}{}{}
$\mathfrak{F} \gets \textsc{stack()}, \mathfrak{F}.\textsc{push}(\superfamily), f_{best} = \varnothing, V_{best} = -\infty$\;
\label{alg:refinement:start}
\While{$\mathfrak{F} \neq \emptyset$} {
$\family \gets \mathfrak{F}.\textsc{pop}()$\;
\label{alg:refinement:pop}
$M(\family) \gets \textsc{buildFamilyMdp}(M,\template,\family)$ \Comment*[r]{applying Def.~\ref{def:family-mdp}}
$\sched,V(\sched) \gets \textsc{modelCheck}(\probmax[M(\family)]{\F{G}})$\;
\label{alg:refinement:solve}
\lIf{$V(\sched)\leq V_{best}$ }{\Continue}
\label{alg:refinement:prune}
\lIf{$f \models \fpol(\family,\sched)$}{
    $f_{best} \gets f,\, V_{best} \gets V(\sched)$, 
    \Continue
}
\label{alg:refinement:sat}
$\paramsuc,\statepathuc \gets \textsc{unsatCore}(\fpol(\family,\sched))$ 
\Comment*[r]{ applying Remark~\ref{remark:custom-unsat-core} and  Def.~\ref{def:unsat-core}}
\label{alg:refinement:unsatcore}
\If(\Comment*[f]{according to~Sec.~\ref{sec:harmonization}}){$f_h,f_1,f_2 \models \fpol^H(\family,\sched)$} 
{
    \For{$f \in \{f_1,f_2\}$} {
    \label{alg:try-f1-f2-1}
        \lIf{$V(f) > V_{best}$}{
            $f_{best} \gets f$, $V_{best} \gets V(f)$
        \label{alg:try-f1-f2-2}
        }
    }
    $\family_1,\family_2 \gets \textsc{splitInformed}(\family,f_h,f_1,f_2)$\;
} \lElse {
    $\family_1,\family_2 \gets \textsc{splitArbitrary}(\family,\paramsuc)$
}
\label{alg:refinement:ref-end}
$\mathfrak{F}.\textsc{push}(\family_1), \ \mathfrak{F}.\textsc{push}(\family_2)$\;
\label{alg:refinement:push}
}
\Return $\template(f)$ \Comment*[r]{applying Def.~\ref{def:parameterization}}
\caption{Recursive DT construction}
\label{alg:refinement}
\end{algorithm}

We propose an abstraction-refinement-based approach to synthesize a tree from a given template that maximizes the value among all trees in the template.  
The basic idea is borrowed from~\cite{cegar}. For a given family \family (starting from \superfamily), an abstraction $M(\family)$ is built that allows either prune the family from the search space or splitting the family into smaller subfamilies that are recursively analyzed.
In order to prune family~\family, we compute the maximizing policy \sched for (sub-)MDP $M(\family)$ and either show that no $f \in \family$ has a better value than the current optimum or that \sched is \template-implementable using SMT encoding from above, updating the optimum. Otherwise, we will divide and conquer \family by partitioning it. We guide the partitioning using harmonization.

The approach is summarized in Algorithm~\ref{alg:refinement}. On l.\ref{alg:refinement:start}, we start with a stack $\mathfrak{F}$ of sub-families, initially containing \superfamily, and initialize the running optimum $f_{best}$ and its value $V_{best}$.
In every iteration, we pop a family \family from the stack, build the corresponding $M(\family)$ and compute the policy \sched that maximizes $\prob[M(\family)]{\F{G}}$ (ll.~\ref{alg:refinement:pop}-\ref{alg:refinement:solve}).
If its value $V(\sched)$ is worse than $V_{best}$, then no assignment in \family induces a tree with a better value, and thus this family is pruned from the search space (l.\ref{alg:refinement:prune}).
Otherwise, we solve the SMT formula $\fpol(\family,\sched)$ to check whether \sched is \template-implementable.
If $f \in \family$ is a parameter assignment s.t.~$\sched[\template(f)]=\sched$, we update the running optimum and prune the family (l.\ref{alg:refinement:sat}).
Otherwise, on ll.\ref{alg:refinement:unsatcore}-\ref{alg:refinement:ref-end} we split~\family into sub-families $\family_1,\family_2$ and push these onto the stack $\mathfrak{F}$ (l.\ref{alg:refinement:push}).

\begin{theorem}
\label{thm:sound}
Algorithm~\ref{alg:refinement} is sound and complete.
\emph{Proof in App.~\ref{appendix:thm:sound:proof}.}
\end{theorem}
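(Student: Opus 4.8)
The plan is to argue three things: (i)~termination, (ii)~soundness — whatever is returned is a genuine element of $\superfamily$ with the claimed maximal value, and (iii)~completeness — the algorithm does not discard a better parameterization. First I would set up the invariant that governs the whole loop. Let $\mathfrak{F}$ be the current stack and $(f_{best},V_{best})$ the running optimum. The invariant is: $\bigcup_{\family \in \mathfrak{F}} \family \ \cup\ \{f_{best}\}$ contains an element $f^{*}$ with $V(f^{*}) = \max_{f \in \superfamily} V(f)$, and moreover $V_{best} = V(f_{best})$ whenever $f_{best} \neq \varnothing$, with $V_{best} \le \max_{\superfamily} V$ always. This clearly holds at initialization (l.\ref{alg:refinement:start}), since $\mathfrak{F} = \{\superfamily\}$ and $V_{best} = -\infty$.

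Next I would check that the invariant is preserved by one iteration, going through the three exit points for a popped family \family. \emph{Pruning (l.\ref{alg:refinement:prune}):} if $V(\sched) \le V_{best}$, then by Proposition~\ref{proposition:family-mdp} every $f \in \family$ satisfies $\sched[\template(f)] \in \Sigma^{M(\family)}$, hence $V(f) \le \probmax[M(\family)]{\F G} = V(\sched) \le V_{best}$; so discarding \family cannot remove the optimum (if the optimum lived only in \family, then $\max_{\superfamily} V \le V_{best} \le \max_{\superfamily}V$, forcing equality, and $f_{best}$ already realizes it). \emph{SAT case (l.\ref{alg:refinement:sat}):} if $f \models \fpol(\family,\sched)$, then by Theorem~\ref{thm:smt} (applied with the total policy \sched, so the ``$\sched(s) \neq \bot$'' clause covers all states) $\sched[\template(f)] = \sched$, hence $V(f) = V(\sched)$; since $\sched$ is optimal in $M(\family) \supseteq$ all $\template(f')$-policies for $f' \in \family$ (Proposition~\ref{proposition:family-mdp}), no $f' \in \family$ beats $V(f)$, so after the update $(f_{best},V_{best}) \gets (f,V(\sched))$ the discarded family again contains nothing better than the new $f_{best}$, and the $V_{best} = V(f_{best})$ part is maintained. \emph{Refinement case (ll.\ref{alg:refinement:unsatcore}--\ref{alg:refinement:push}):} here I would only need that $\textsc{splitInformed}$ and $\textsc{splitArbitrary}$ produce $\family_1,\family_2$ with $\family_1 \cup \family_2 = \family$ (a partition, or at least a cover), which is part of their specification; then $\bigcup \mathfrak{F}$ is unchanged as a set of parameterizations, and the optional updates on ll.\ref{alg:try-f1-f2-1}--\ref{alg:try-f1-f2-2} only improve $V_{best}$ while keeping $V_{best} = V(f_{best})$ (we only set $f_{best} \gets f$ together with $V_{best} \gets V(f)$). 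In all three cases the invariant survives.

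For termination I would argue that each refinement step strictly decreases a well-founded measure: since $\family \subseteq \superfamily$ is finite, $|\family| \geq 2$ whenever we reach the refinement branch (a singleton $\family = \{f\}$ has $f \models \fpol(\family,\sched)$ because $\sched$ is then forced to equal $\sched[\template(f)]$, so we would have exited at l.\ref{alg:refinement:sat}), and the split yields $\family_1,\family_2$ each strictly smaller than \family; hence the multiset of family sizes on the stack decreases in the well-founded multiset order, and the loop cannot run forever. When the stack empties, the invariant gives $f^{*} \in \{f_{best}\}$, i.e.\ $f_{best} \neq \varnothing$ and $V(f_{best}) = \max_{\superfamily} V$, and l.\ref{alg:refinement:sat}/l.\ref{alg:try-f1-f2-2} guarantee $f_{best}$ is returned as a real parameterization in $\superfamily$; this is exactly the claimed output, giving both soundness and completeness. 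Finally, to connect $\max_{f \in \superfamily} V(f)$ with the bounded-depth synthesis objective, I would invoke the remark preceding Def.~\ref{def:family-mdp}: restricting bound selection functions to $\{s(v) \mid s \in S, v \in \variables\}$ loses no expressiveness over arbitrary integer bounds (any threshold can be shifted to such a value without changing which states satisfy the predicate), so \superfamily already contains an optimal DT of the template's depth.

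The main obstacle I anticipate is making the ``no loss of expressiveness'' and ``split is a cover'' steps airtight rather than hand-waved: precisely, one must verify (a)~that $\superfamily$ with the discretized bounds realizes every DT-induced policy that any integer-bounded DT of the same topology realizes — this needs the observation that for each inner node only finitely many equivalence classes of bounds exist, determined by the sorted values $\{s(v_j)\}$ — and (b)~that both \textsc{splitInformed} (driven by harmonization and the harmonizing variable $x_{f_h}$ of Theorem~\ref{thm:harmonization}) and \textsc{splitArbitrary} (driven by the unsat core $\paramsuc$) genuinely partition (or at least cover) the popped family, including the edge case where the harmonizing variable's domain in \family has size one. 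Everything else is a routine induction on the loop once the invariant is pinned down.
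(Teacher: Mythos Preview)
Your proposal is correct and follows essentially the same approach as the paper's proof: both argue by inspecting the two pruning points (ll.\ref{alg:refinement:prune} and \ref{alg:refinement:sat}) and invoking Proposition~\ref{proposition:family-mdp} to bound $\max_{f \in \family} V(f)$ by $V(\sched)$, concluding that no optimal parameterization is ever discarded without $f_{best}$ already being optimal. Your presentation is more explicit---packaging the argument as a loop invariant and spelling out the multiset-order termination argument including the singleton case---whereas the paper compresses this into a few sentences; you also correctly flag that the split procedures must produce a cover of \family, which the paper treats as an implicit assumption (``any nontrivial splitting''), and you raise the discretization-of-bounds point, which the paper handles in the text preceding Def.~\ref{def:family-mdp} rather than in the proof itself.
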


Any nontrivial splitting makes Alg.~\ref{alg:refinement} terminate: the number $|\superfamily|$ of parameterizations is finite and, in the worst case, a nontrivial splitting yields a family with a single assignment~$f$, in which case $M(\{f\})$ is an MC with only one policy $\sched[\template(f)]$ and the SMT formula $\fpol(\{f\},\sched[\template(f)])$ is satisfiable with parameter assignment $f$.
However, even for a tree template of small depth, the number of template instantiations is insurmountable, and thus, a proper splitting strategy should yield sub-families that can be pruned as soon as possible.

To deal with enormous design spaces, the abstraction refinement framework of~\cite{cegar} successfully used the notion of inconsistent variables (holes), where a split was made on a hole for which the optimizing policy wanted to pick multiple values.
In our framework, the harmonising variable $x_h$ plays the role of this inconsistent hole, and therefore, on l.\ref{alg:refinement:unsatcore}, we extract the unsatisfiable core (see Remark~\ref{remark:custom-unsat-core}) and solve the SMT formula $\fpol^H(\family,\sched)$, as described in Sec.~\ref{sec:harmonization}.
Assume the formula is SAT with the harmonising variable $x_h$ and $f_1$,$f_2$ is the corresponding pair of trees that differ in the value of $x_h$.
We then split \family into subfamilies $\family_1,\family_2$ by splitting the domain $\family(x_h)$ s.t.~$f_1(x_h) \in \family_1(x_h)$ and $f_2(x_h) \in \family_2(x_h)$; if $x_h$ is a variable encoding bound selection function $\boundfun_n$, then its domain (initially, an interval) is split into two sub-intervals.
The idea here is to build in subsequent iterations sub-MDPs $M(\family_1),M(\family_2)$ that do not contain~\sched, although this cannot always be guaranteed. Otherwise, if the harmonising formula is unsatisfiable, we split the family arbitrarily on some parameter from $\paramsuc$. We remark that during our experiments, the harmonising formula was practically always satisfiable.
Additionally, on ll.\ref{alg:try-f1-f2-1}-\ref{alg:try-f1-f2-2} we update the value of $V_{best}$ based on the values $V(f_1),V(f_2)$. Empirically, this leads to a mildly better performance.

\subsubsection{Bounded-depth synthesis.}
Even for modest values of~$k$, Algorithm~\ref{alg:refinement} cannot explore all parameter assignments. Finding good assignments early can accelerate abstraction refinement~\cite{cegar,andriushchenko2021inductive} as it prunes sub-optimal families faster. Thus, when searching for the optimal $k$-tree, it can be beneficial to first go through families of $0$-, $1$-, $2$-trees, etc., where good values are easier to find.
This idea inspired the \emph{bounded-depth mode} of our abstraction refinement approach that proceeds as follows.
We iteratively use Algorithm~\ref{alg:refinement} on templates of trees of depths $0,1,\dots,k_{\max}$; in each iteration, we keep the current optimum $f_{best}$ and its value $V_{best}$ and use it in subsequent iterations.
To ensure that the algorithm reaches depth $k_{\max}$, we run Algorithm~\ref{alg:refinement} on lower depths $0,1,\dots,k_{\max}{-}1$ with a timeout, that we empirically choose to be $t/(2 \cdot k_{\max})$, such that at least 50\% of the given time is dedicated to the search on depth $k_{\max}$.

\subsubsection{Tree hints.}
Having (partially) explored a family of $(\kmone)$-trees, we can accelerate the search for the best $k$-tree even further by first looking at $k$-trees that mimic a good $(\kmone)$-tree $\tree_\kmone$. Our abstraction-refinement approach on families naturally supports this idea: before running Algorithm~\ref{alg:refinement} with a stack containing the whole family $\family^{\template_k}$ for a tree template $\template_k$ of depth $k$, we can first make it look within the family $\family'$ of assignments that mimic $\tree_\kmone$. Intuitively, $\family' \subset \family^{\template_k}$ describes all $k$-trees
that, in the inner nodes on the first $k-1$ levels, behave according to $\tree_\kmone$, and behave arbitrarily on the last $k$-th level as well as in the leaves.
Putting family $\family'$ on top of the stack prioritizes the search for the best $k$-tree within this family, increasing the chance of finding good $k$-trees early.

\subsubsection{Tree post-processing.}
\label{sec:postprocessing}
We perform two steps to remove redundant nodes.
First, we remove every node $n \in N \cup L$ for which $\sts{\tree}{n} = \emptyset$, that is, no state $s \in S$ can take a path to $n$.
Second, if for a node $n \in N$ it holds that $l(n),r(n) \in L$ and $\delta(l(n)) = \delta(r(n))$, i.e.~the children are leaves selecting the same action, then both children are removed, and $n$ becomes a leaf associated with this action.

\section{Experimental Evaluation}
\label{sec:experiments}

In the experimental evaluation, we investigate the performance of the proposed synthesis algorithm \dtpaynt. \dtpaynt is an algorithm that solves the bounded-depth synthesis problem (with an explicit timeout) using abstraction refinement in a bounded-depth mode using tree hints as described in Sec.~\ref{sec:refinement}. \dtpaynt is implemented on top of \paynt~\cite{DBLP:conf/cav/AndriushchenkoC21} and \storm~\cite{STORM}, utilising Z3~\cite{z3} to solve SMT queries. The implementation and all the considered benchmarks are publicly available\footnote{\url{https://doi.org/10.5281/zenodo.15228774}}.
Our evaluation focuses on the following four questions:
\begin{enumerate}[topsep=2pt,leftmargin=2.5em]
    \item[Q1:] \emph{Does \dtpaynt outperform \omdt~\cite{vos2023optimal} that is based on a MILP formulation? Can \dtpaynt scale to more complex MDPs?}
    \item[Q2:] \emph{Does \dtpaynt provide a good trade-off between value and size of the synthesized trees compared to \dtcontrol~\cite{ashok2021dtcontrol}?} 
    \item[Q3:]  \emph{Can \dtpaynt handle huge MDPs having up to 1M states?} 
    \item[Q4:] \emph{Can we use \dtpaynt as a reduction procedure for large DTs?}
\end{enumerate}

\paragraph{Setting.}
The timeout (TO) for all experiments was 20 minutes\footnote{Based on our preliminary experiments as well as the results from~\cite{vos2023optimal}.
}.
All the experiments were run on a machine equipped with AMD EPYC 9124 16-core Processor and 380GB RAM. Each method was run on a single CPU core..

\paragraph{Benchmarks.}
In order to answer the questions Q1 and Q2, we consider three types of benchmarks: (1)~The 11 models from \omdt~\cite{vos2023optimal}
using expected discounted rewards. (2)~The standard MDP benchmarks from the QComp evaluation~\cite{budde2020correctness} with 8 models with state variables defined.
{Since \omdt requires discounted-reward specifications, we derived such specifications for these QComp models.
(3) Two fully observable variants of the classical \emph{maze}~\cite{hauskrecht1997incremental} problem with a discounted-reward specifications. 
Information about the models is reported in Appendix~\ref{appendix:model-info}.
In order to perform a fair comparison of the tools, we modify the models as follows: (1) We equip all the models with the action $\actrandom$ that uniformly selects the available actions (recall Def.~\ref{def:induced-policy}). (2) We ensure that in every state, all actions in $\Act$ are available\footnote{The actions added to the original QComp models behave as $\actrandom$.}, as it is required by the available implementation of \omdt. We note that, in theory, the approach proposed in~\cite{vos2023optimal} can handle MDPs where not every action is available in every state, however, there is currently no implementation which supports this. Also note that this modification effectively makes the synthesis task on these models more difficult. We group the benchmarks into two categories: (1) \emph{smaller models} with the number of choices ($|S| \cdot |Act|$)  below 3k and (2) \emph{larger models} with up to 10k states and 175k choices.
Overall, we have 10 smaller models and 11 larger~models.

\vspace{-0.5em}
\subsection*{Q1: Comparison with \omdt} 

We consider the bounded-depth synthesis problem and compare \dtpaynt with \omdt.
We performed a comparison including all benchmarks and tree depths up to $k=8$.
The comparison therefore includes $21\cdot 8=168$ instances.

\setlength{\textfloatsep}{12pt}
\begin{figure}[t]
    \centering
    \begin{subfigure}[t]{0.48\textwidth}
        \centering
        \resizebox{\textwidth}{!}{\input{figures/plot-q1-scatter-smaller-models.pgf}}
        \label{fig:omdt-scatter:depth-le3}
    \end{subfigure}
    \hfill
    \begin{subfigure}[t]{0.48\textwidth}
        \centering
        \resizebox{1\textwidth}{!}{\input{figures/plot-q1-scatter-larger-models.pgf}}
        \label{fig:omdt-scatter:depth-g3}
    \end{subfigure}
    \vspace{-2em}
    \caption{Comparison on the bounded-depth synthesis problem for Q1. The scatter plot shows the normalized values of the best $k$-DTs found in the 20-minute timeout. A point $(x,y)$ shows
    the value of the best $k$-DT found by \dtpaynt (the $x$-value) and \omdt (the $y$-value) for a specific model and a specific depth~$k$. Points below the diagonal shows the synthesis problems where \dtpaynt outperforms \omdt.
    The plot on the left compares performance on smaller models (less than 3k choices), and the plot on the right compares performance on larger models.
    }
    \label{fig:omdt-scatter}
   
\end{figure}

\paragraph{Results.}
Figure~\ref{fig:omdt-scatter} shows two scatter plots that visualize the normalized values of the best $k$-DTs found by \dtpaynt and \omdt. The values are normalized such that 0 corresponds to a uniform random policy and 1 to an optimal MDP policy. The uniform random policy can be represented with a 0-DT that chooses the action $\actrandom$\footnote{Note that this value is not a strict lower bound on the worst policy.
}.  Note that the normalized value 1.0 represents the value of the optimal MDP strategy, which might not be reachable for a given synthesis problem with bounded depth. We split the results according to the size of the MDP and the depth $k$ to highlight the scalability advantage of \dtpaynt over \omdt. Detailed statistics for the experiments are reported in Appendix~\ref{app:results}.

\paragraph{On the smaller models, both tools perform very similarly.} With growing $k$, both tools produce DTs with an increasing performance that gets close to the optimal value. \omdt finds slightly better DTs on two models: \emph{lake-12} and \emph{sys-2}.

\paragraph{On the majority of the larger models, \dtpaynt outperforms \omdt significantly.} In fact, there is no larger model in our data set where \omdt performs better at any depth $k$. \omdt is only able to keep up with \dtpaynt on the model \emph{inventory} and for smaller $k$ on \emph{firewire-3}. On the other models, the performance gap is significant. The greatest difference was observed on models \emph{resource}, \emph{pnueli} and \emph{csma} where \dtpaynt is able to find a tree with value within 1\% of the optimum while \omdt struggles to improve on the best 0-DT. Another observation is that with increasing $k$, \dtpaynt mostly improves, but \omdt struggles even more to find good DTs.
This can be seen by the higher occurrence of the green dots 
at the bottom of the plot. These results clearly demonstrate the scalability advantage of \dtpaynt compared to \omdt.

\paragraph{Runtimes.} In most cases, both tools reach the 20-minute timeout: \dtpaynt is not able to completely explore all DTs up to the given depth; \omdt is not able to reduce the gap between the lower and upper bound in the underlying MILP below the given precision. However, we observe (see Fig.~\ref{fig:appendix-omdt-runtimes} in Appendix~\ref{app:runtimes}) that typically \dtpaynt finds better DTs significantly faster.

\paragraph{Conclusion.} 
For simple synthesis problems (smaller MDPs or depths), our approach is competitive with the monolithic MILP formulation. For larger models, solving MILP is no longer tractable while \dtpaynt is able to find high-quality DTs and keep its performance even for higher values of $k$. This clearly demonstrates the significant
advantage of \dtpaynt over \omdt.

\vspace{-.5em}
\subsection*{Q2: Size and Value Trade-off Comparison with \dtcontrol}

\dtcontrol, in contrast to \omdt and \dtpaynt, solves the policy mapping problem; it takes a policy \sched and constructs a DT representing~\sched. It favors scalability over minimality, i.e., it does not search for the smallest DT representing~\sched. Thus, \dtcontrol excels in finding a DT representing the given optimal policy even for large MDPs. In this section, we demonstrate the key advantage of \dtpaynt: it is able to effectively control the trade-offs between the size and value of the resulting DTs at the cost of scalability.

\paragraph{Setting.} Fig.~\ref{fig:dtcontrol-bar-plot} reports the trade-offs achieved by DTs of different depths compared to the DT produced by \dtcontrol. The figure contains results for 13 models (out of the 21 models) that demonstrate the key observations we made; complete results can be found in Appendix~\ref{app:results}. As before, the timeout for each experiment (model and depth) is 20 minutes.
The upper part compares the normalized values achieved by the particular DTs produced by \dtpaynt. 
The lower part of Fig.~\ref{fig:dtcontrol-bar-plot} compares the sizes of DTs (the number of inner nodes). Recall that \dtpaynt optimizes the depth, and thus, the number of inner nodes can be larger than in \dtcontrol, although the depth of the DT is smaller.

\paragraph{Preprocessing for \dtcontrol}
In order to provide a fair comparison, we perform the following preprocessing of the policy \sched for \dtcontrol: we remove from its tabular representation 
the states in the goal set $G$ and the states that are unreachable in \imc. This helps \dtcontrol to find significantly (5-10 times in some cases) smaller DTs. \dtpaynt performs this simplification implicitly.

\setlength{\textfloatsep}{15pt}
\begin{figure}[t]
    \centering
    \resizebox{\textwidth}{!}{\input{figures/plot-q2-bar-plot.pgf}}
    \vspace{-2em}
    \caption{The plot shows the trade-off between the value and size of the DTs found by \dtpaynt and \dtcontrol. The left part contains smaller models, and the right part contains larger ones. The upper part shows the normalized values of the synthesized DTs (the same normalization as in Q1). \dtcontrol maps an optimal policy with the value 1 (not shown).
    The lower part shows the number of inner nodes using a logarithmic scale. We report the performance of \dtpaynt for depths 1 to 8 (if a DT with a normalized value above 0.95 is obtained, we exclude the subsequent depths on the given model to simplify the plot). Numbers in brackets above the \dtcontrol bars denote the depth of the DT.}
    \label{fig:dtcontrol-bar-plot}
\end{figure}

\paragraph{Results.} On the majority of the models, \dtpaynt finds DTs that provides good trade-offs between the size and value.
On the models \emph{inventory, csma} and \emph{rabin}, \dtpaynt finds a DT that achieves the optimal value, while the number of inner nodes is reduced by 2 times (for \emph{rabin}) up to 16 times (for \emph{inventory}) compared to the DTs constructed by \dtcontrol. On the other 6 models, \dtpaynt is able to find a DT with a normalized value better than 0.9 while producing a more compact DT compared to \dtcontrol. These DTs are, on average, 3-5 times smaller. The most interesting trade-off can be observed on the model \emph{consensus}, where the DT produced by \dtpaynt is 38-times smaller (note that the bottom plot in Fig.~\ref{fig:dtcontrol-bar-plot} uses logarithmic scale) while its normalized value is above 0.93.  On the models \emph{sys-tree} and \emph{tictactoe}, \dtpaynt finds a DT that achieves the optimal value, but it has more inner nodes compared to the one from \dtcontrol. 
For the models where good DTs require depth $k\geq 8$, such as \emph{lake-12}, \emph{sys-2} and \emph{maze-7}, \dtpaynt is not capable of finding a DT that achieves a value close to 1, but it still provides a reasonable trade-off. In the case of \emph{maze-7}, \dtpaynt finds a 25 times smaller DT that achieves normalized value~0.76.

\paragraph{Runtimes.} The runtimes for \dtcontrol are less than 2 seconds for all benchmarks. While \dtpaynt usually finds the best DT long before the 20-minute timeout is reached, it would still benefit from a longer timeout for larger depths.

\paragraph{Conclusions.} We show the advantages and limitations of using \dtpaynt in comparison to \dtcontrol. While \dtpaynt iteratively performs analysis of the underlying MDP and solves complicated SMT queries that hinder the scalability, it is able, in contrast to \dtcontrol, to effectively explore alternative policies. This is essential for constructing smaller and thus more explainable DTs.

\subsection*{Q3: Scalability of \dtpaynt on very large MDPs}
The goal of the experiment is to demonstrate that \dtpaynt can handle much larger MDPs than those considered in Q1 and Q2 provided that a small DT with the desired performance exists.
There are generally three reasons why small DTs suffice: (1)~The optimal policy induces only a very small MC, and thus the optimal DT needs to encode the decision only for a small subset of states. (2)~In many states, there is only one available action, and thus the optimal DT does not need to encode decisions in these states. 
(3)~In many states, playing the random action $\actrandom$ that uniformly selects one of the available actions is sufficient to obtain the optimal performance. $\dtpaynt$, in contrast to \omdt and \dtcontrol, naturally leverages all three features; recall Def.~\ref{def:induced-policy} (induced policy) and our SMT encoding.
Note that the experimental setting in Q1 and Q2 was designed with the goal of providing a fair comparison among the tools, and this advantage of \dtpaynt was not fully exploited.

\paragraph{Setting.} We consider MDPs with up to 1M states. We take MDPs from the QComp benchmark set and scale them up using their parameters. We run \dtpaynt for depths 0 to 4. Table~\ref{tab:q3-big-mdps} reports selected results that demonstrate the key observations within Q3. We compare the results with \dtcontrol that maps the optimal policy $\schedopt$. To provide a fair comparison, we exclude from $\schedopt$ the unreachable states (as in Q2) and states where only one action is available (such states were not in Q2). We report the size of the pre-processed policy as $|\sigma^*_{rel}|$\footnote{$|\sigma^*_{rel}|$ can be further reduced, at the expense of precision, by considering only states with a high probability of being visited~\cite{brazdil2015counterexample}. We left this idea for future work.}.

\begin{table}[t]
\setlength{\tabcolsep}{2pt}
\centering
\small
\begin{tabular}{l@{\hskip 4pt} cc@{\hskip 16pt} rrr@{\hskip 16pt} rrr}
\toprule
\multirow{2}{*}{model} & \multirow{2}{*}{$|S|$} & \multirow{2}{*}{choices} & \multicolumn{3}{c@{\hskip 16pt}}{\dtpaynt} & \multicolumn{3}{c}{\dtcontrol} \\
\cmidrule(lr{2em}){4-6}\cmidrule(lr){7-9}
 & & & value & nodes & time & $|\sigma^{*}_{rel}|$ & nodes & time \\
\midrule
ij-20 & 1M & 10M & 1 & 0 & 547s & 624k & 393k & 210s \\
pnueli-zuck-5 & 308k & 1.7M & 1 & 0 & 103s & 2395 & 1258 & 1s \\
firewire-f-36 & 212k & 479k & 1 & 14 & 531s & 376 & 12 & 1s \\
pacman-30 & 853k & 1.1M & 0.76 & 6 & 1360s & 673 & 144 & 1s \\
firewire-t-3-600 & 1.1M & 1.5M & 0.85 & 8 & 3135s & 1147 & 12 & 1s \\

\bottomrule
\end{tabular}
\vspace{.5em}
\caption{This table presents selected results of the synthesis for large MDPs. 
It shows the normalized values (as before), the number of inner nodes and the time it took to find the DTs. $|\sigma^*_{rel}|$ is the size of the preprocessed policy \schedopt.}
\vspace{-1em}
\label{tab:q3-big-mdps}
\end{table}

\paragraph{Results.} 
Except for the \emph{ij-20} model, the pre-processing drastically reduces the size of $\schedopt$. The optimal DTs thus need to encode significantly less state compared to the size of $S$. For the first two models, \dtpaynt finds a 0-DT with the optimal value. It selects a single action $\alpha \neq \actrandom$ and exploits the fact that playing $\actrandom$ in the states where $\alpha \notin \Act(s)$ is optimal. Although trivial policies suffice in these two models, they showcase the ability of \dtpaynt to steer towards simpler policies, while \dtcontrol returns very complicated DTs despite the pre-processing. Note that the QComp benchmark contains several other synthesis problems where simple randomization is sufficient. In the other models in the table, the randomization is not optimal.
For the \emph{firewire-f-36} model, \dtpaynt finds an optimal DT with a depth lower than \dtcontrol (4 vs. 7), but it has two more inner nodes.
For the last two models, \dtpaynt finds DTs that achieve a reasonable performance while being smaller (24-fold for \emph{pacman-30}) compared to the DTs produced by \dtcontrol.

\paragraph{Conclusions.} \dtpaynt is able to find small DTs with a good performance even for very large MDPs with hundreds of thousands of states. In some cases, it finds optimal DTs that are smaller than the DTs constructed by \dtcontrol.

\subsection*{Q4: Using \dtpaynt as a reduction procedure for large DTs}

Finally, we discuss an alternative use of \dtpaynt that demonstrates its broader applicability even to MDPs that require large DTs. 
The main idea is that \dtpaynt can be used as a procedure for minimizing decision trees by analyzing individual sub-trees in a DT and trying to find an adequate smaller alternative for a given sub-tree. 
We give a short and simple outline below. For a more detailed discussion on this use of \dtpaynt, refer to our follow-up paper~\cite{andriushchenko25dtnest}.

\paragraph{Using \dtpaynt in a compositional manner.}
We take a large DT $\tree$ representing 
an optimal policy for the given MDP and investigate whether \dtpaynt can find a smaller variant of this DT. Therefore, we consider $\tree$ and its sub-tree $\tree_n$ given by an inner node $n$ in $\tree$ (i.e. $n$ is the root of $\tree_n$). We build a tree template $\template_n$ by fixing all nodes outside $\tree_n$ to coincide with $\tree$.
For the nodes in $\tree_n$, we define $\Gamma$ and $\Delta$ as in Def.~\ref{def:template}.
The size of the sub-tree $\tree_n$ indeed determines the size of $\template_n$, and also the complexity of the synthesis problem: $\template_n$ defines a partial policy and thus induces a (possibly much smaller) subset of states that has to be considered in the synthesis process. \dtpaynt can straightforwardly take $\template_n$ and synthesize a new smaller tree $\tree_n'$ to replace $\tree_n$ in $\tree$ to reduce the total number of inner nodes. We can allow a certain error with respect to the optimal policy to increase the possible reduction. This process can be applied iteratively on different sub-trees on the original DT $\tree$.

\paragraph{Setting.} We consider two more complex variants of the models from Q1 and Q2: i) \emph{consensus-Q4} has only 23k states, but the DT found by \dtcontrol has depth 19 and 841 inner nodes; ii) \emph{csma-Q4} has 1.5M states and the DT found by \dtcontrol has depth 19 and 236 nodes. In both cases, \dtpaynt fails to find a small DT with a reasonable performance. To obtain a smaller DT with almost optimal value, we take the DT $\tree$ produced by \dtcontrol and iteratively run \dtpaynt on templates induced by all sub-trees of $\tree$ of depth 7 (and consequently on depth 6 if all sub-trees of depth 7 were processed).  We allow an \emph{absolute error} of 1\% with respect to the optimal policy.

\paragraph{Results.}  
For the \emph{consensus-Q4} model, \dtpaynt analyzed in 1 hour 54 different sub-trees and found 37 smaller sub-trees that were used to replace the original ones in $\tree$. The resulting DT has a depth of 16 and 452 inner nodes, signifying a 46\% decrease in size, while its normalized value is 0.99. For the \emph{csma-Q4} model, \dtpaynt analyzed in 30 minutes 34 different sub-trees and found 24 smaller sub-trees. The resulting DT has depth 6 and 22 inner nodes, signifying a 90\% decrease in size, while its normalized value is 0.99.

\paragraph{Conclusions.} 
Although the scalability of \dtpaynt itself is limited if a deep tree is needed to achieve the desired quality of the resulting policy, we demonstrate that \dtpaynt can be effectively used to reduce the size of large DTs even in large MDPs. Thus \dtpaynt provides an important contribution towards the automated construction of near-optimal smaller DTs for complex problems.

\section{Related Work}
The related work in learning~\cite{ashok2021dtcontrol} and MILP-based synthesis~\cite{vos2023optimal} of decision trees as well as in deductive controller synthesis~\cite{andriushchenko2022inductive} is discussed in the introduction.

\paragraph{Learning concise representation of neural policies.} With the boom of explainable reinforcement learning, various methods for learning concise representations of neural policies have been proposed. Imitation learning methods such as VIPER~\cite{bastani2018verifiable} extract a DT from a more complex teacher policy using a supervised learning paradigm. As shown in~\cite{vos2023optimal}, imitating a complex policy using a small DT can lead to poor performance as the limited capacity of the DT is used ineffectively. A different approach for overcoming this limitation has been recently proposed in~\cite{topin2021iterative}. The authors introduce a new type of MDP (so-called iterative bounding MDP) where each policy corresponds to a DT policy for the base MDP. Especially for small DTs, this approach significantly outperforms VIPER. A different line of work focuses on learning a programmatic representation of policies using an oracle in the form of neural policy. In the seminal paper~\cite{DBLP:conf/icml/VermaMSKC18}, the authors showed how to search over programmatic policies that minimize the distance from the oracle. More recently, a fast distillation method that uses regularized oblique trees to produce tree programs that fits neural oracle has been proposed~\cite{kohler2024interpretable}.

\paragraph{Beyond DTs.} Recently, alternative representations of policies in MDPs have been studied. In~\cite{batz2024programmatic}, the authors establish a tight connection between program-level construction of strategies resolving nondeterminism in probabilistic programs and finding good policies in (countably infinite) MDPs. This enables a direct construct of programmatic policies. A different line of work introduces templates of decision diagrams using hierarchical control structures with under-specified entities
to encapsulate and reuse common decision-making patterns~\cite{dubslaff2024template}. In contrast to our templates used to effectively reason about sets of DTs, the hierarchical decision diagrams aim at a more concise and explainable representation of the policies.

\section{Conclusion}
We present \dtpaynt, a novel approach to synthesize small DTs providing good trade-offs between quality and size. 
Our experiments demonstrate clear advantages with respect to the state-of-the-art. 
In the future, we will investigate how to improve the scalability: (1) exploit counterexamples~\cite{cegis-journal} in the synthesis of DTs, (2)~symbiotically combine \dtpaynt with \dtcontrol in a more efficient synthesis loop to improve over the ideas in Q4, and (3)~construct the DTs only for a subset of most relevant states as in~\cite{brazdil2015counterexample}.

\subsubsection{Acknowledgments.}
\inlinegraphics{eu_logo} This work has been executed under the project VASSAL: ``Verification and Analysis for Safety and Security of Applications in Life'' funded by the European Union under Horizon Europe WIDERA Coordination and Support Action/Grant Agreement No. 101160022  and the NWO VENI Grant ProMiSe (222.147). Additionally, this work has been funded by the Czech Science Foundation grant \mbox{GA23-06963S} (VESCAA) and the IGA VUT project FIT-S-23-8151.

\subsubsection{Disclosure of Interests.}
The authors have no competing interests to declare that are relevant to the content of this article.

\newpage

\bibliographystyle{splncs04}
\bibliography{bibliography}

\newpage

\appendix

\section{Proofs}
\label{appendix:proofs}

In the following, assume an MDP $M = \mdpT$ and a tree template $\template = \templateT$.

\begin{lemma}
\label{lemma:family}
Assume a parameterization $f$ and a set $\family$ of parameterizations. Then $f \models \fdom(\family)$ iff $f \in \family$.
\end{lemma}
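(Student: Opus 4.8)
The plan is to unfold the definitions on both sides and observe that they literally coincide up to a trivial rewriting. Recall that $\fdom(\family) = \bigwedge_{x \in \params} x \in \family(x)$, a finite conjunction of membership constraints, one for each variable $x \in \params = \{\boundfun_n, \decisionfun_n \mid n \in N\} \cup \{\actionfun_n \mid n \in L\}$. Recall also that an assignment $f$ to the variables in $\params$ is said to \emph{satisfy} the conjunction, written $f \models \fdom(\family)$, precisely when it satisfies each conjunct, i.e.~when $f(x) \in \family(x)$ for every $x \in \params$. On the other side, by Definition~\ref{def:family} the parameterization $f$ belongs to $\family$, written $f \in \family$, exactly when $f(x) \in \family(x)$ for every $x \in \params$. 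So both sides reduce to the same pointwise condition.

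Concretely, I would structure the argument as a short chain of equivalences. First I would state that, by the semantics of conjunction in quantifier-free linear integer arithmetic, $f \models \bigwedge_{x \in \params} x \in \family(x)$ if and only if $f \models (x \in \family(x))$ for every $x \in \params$. Second, for a single membership atom, $f \models (x \in \family(x))$ if and only if $f(x) \in \family(x)$, which is just the interpretation of the atom $x \in \family(x)$ under the assignment $f$ (here $\family(x) \subseteq \integers$ is a concrete set and $f(x) \in \integers$ is a concrete integer, so membership is decided in the standard model). Third, the statement ``$f(x) \in \family(x)$ for every $x \in \params$'' is, verbatim, the definition of $f \in \family$ from Definition~\ref{def:family}. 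Composing these three equivalences yields $f \models \fdom(\family) \iff f \in \family$.

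There is essentially no obstacle here; the lemma is a bookkeeping statement that pins down the intended meaning of the $\fdom$ constraint, and its only role is to be invoked in the proof of Theorem~\ref{thm:smt} (and later results) so that those proofs can freely move between the syntactic constraint $\fdom(\family)$ and the semantic membership $f \in \family$. The one point worth being careful about is that $\params$ is a finite set, so the conjunction $\fdom(\family)$ is a genuine (finite) formula of the logic and the first equivalence above is unproblematic; this finiteness is immediate since $N$ and $L$ are finite because the underlying tree $T$ is finite. I would therefore keep the write-up to a few lines, making the three equivalences explicit and citing Definition~\ref{def:family} for the last one.
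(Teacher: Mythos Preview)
Your proposal is correct and takes essentially the same approach as the paper: both simply unfold the definition of $\fdom(\family)$ and match it against the membership condition for $f \in \family$. The paper's own proof is a single line (``Follows directly from Def.~\ref{def:parameterization}''), and your argument spells out exactly the equivalences behind that line; if anything, your citation of Definition~\ref{def:family} is the more apt one, since that is where $f \in \family$ is actually defined.
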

\begin{proof}
Follows directly from Def.~\ref{def:parameterization}.
\end{proof}

\begin{corollary}
\label{corollary:family-primed}
Assume parameterizations $f,f'$ and two sets $\family,\family'$ of parameterizations where $\family'$ is a set identical to $\family$ (but associated with primed variables). Then, $f,f' \models \fdom(\family) \land \fdom(\family') \Leftrightarrow f \in \family \land f' \in \family' \Leftrightarrow f,f' \in \family$.
\end{corollary}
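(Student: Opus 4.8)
The final statement to prove is Corollary~\ref{corollary:family-primed}, which asserts the chain of equivalences
\[
f,f' \models \fdom(\family) \land \fdom(\family')
\;\Leftrightarrow\; f \in \family \land f' \in \family'
\;\Leftrightarrow\; f,f' \in \family.
\]

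\textbf{Approach.} The plan is to reduce both equivalences to Lemma~\ref{lemma:family}, applied once to the unprimed variables and once to the primed variables, together with the definitional observation that $\family'$ is a verbatim copy of $\family$ under the renaming $x \mapsto x'$. The proof is essentially bookkeeping about which set of variables each conjunct of $\fdom$ constrains, so I would keep it to a few lines.

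\textbf{Key steps.} First I would note that $\fdom(\family) = \bigwedge_{x \in \params} x \in \family(x)$ mentions only the unprimed variables $\params$, while $\fdom(\family') = \bigwedge_{x' \in \params[\template']} x' \in \family'(x')$ mentions only the primed variables; hence the conjunction $\fdom(\family) \land \fdom(\family')$ is a conjunction over disjoint variable sets and $f,f' \models \fdom(\family)\land\fdom(\family')$ holds iff $f \models \fdom(\family)$ and $f' \models \fdom(\family')$. Second, Lemma~\ref{lemma:family} gives $f \models \fdom(\family) \Leftrightarrow f \in \family$; applying the same lemma to the primed template (which is structurally identical, only with renamed variables) gives $f' \models \fdom(\family') \Leftrightarrow f' \in \family'$. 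Combining these yields the first equivalence $f,f' \models \fdom(\family) \land \fdom(\family') \Leftrightarrow f \in \family \land f' \in \family'$. Third, for the second equivalence I would invoke the hypothesis of the corollary that $\family'$ is the set identical to $\family$ (i.e.\ $\family'(x') = \family(x)$ for every $x \in \params$), so that $f' \in \family'$ is by definition the same condition on $f'$ as $f' \in \family$ is on $f$; with the mild abuse of notation already introduced after Def.~\ref{def:family} — writing $f \in \family$ for a parameterization regardless of whether its variables are primed — the condition $f \in \family \land f' \in \family'$ is exactly ``$f,f' \in \family$''.

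\textbf{Main obstacle.} There is no real mathematical obstacle; the only thing requiring a little care is making the variable-renaming identification between $\family$ and $\family'$ precise, and making sure the notation ``$f,f' \in \family$'' is read as ``both $f$ and $f'$ lie in (the appropriately primed/unprimed copy of) $\family$''. Once that identification is spelled out, the corollary is immediate from Lemma~\ref{lemma:family} applied twice and the disjointness of $\params$ and $\params[\template']$.
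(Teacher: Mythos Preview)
Your proposal is correct and matches the paper's approach: the paper states this as an immediate corollary of Lemma~\ref{lemma:family} with no separate proof, and your argument---splitting the conjunction over the disjoint variable sets $\params$ and $\params[\template']$, applying Lemma~\ref{lemma:family} once to each, and then using the identification $\family'(x') = \family(x)$---is exactly the intended unpacking. The only content beyond the lemma is the variable-renaming bookkeeping you already spelled out.
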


\begin{lemma}
\label{lemma:instantiation}
Let $f = (\decisionfun, \boundfun, \actionfun)$, where each function $\mathcal{Z} \in \{\decisionfun, \boundfun, \actionfun\}$ is expressed using parameters $\mathcal{Z}_n$ to denote $\mathcal{Z}(n)$.
Let $s \in S$ be an arbitrary state and let $n = \leaf[\template(f)](s)$. Then $\sched[\template(f)](s) = f(\actionfun_n)$ if $f(\actionfun_n) \in Act(s)$ and $\sched[\template(f)](s) = \actrandom$ otherwise.
\end{lemma}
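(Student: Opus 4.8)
\textbf{Proof plan for Lemma~\ref{lemma:instantiation}.}

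The plan is to unfold the definitions of the induced policy (Def.~\ref{def:induced-policy}) and of the parameterization (Def.~\ref{def:parameterization}) and verify that they coincide on the leaf $n$ that represents $s$. First I would recall that, by Def.~\ref{def:parameterization}, the DT $\template(f)$ is exactly $(T, \gamma_{\decisionfun,\boundfun}, \delta_\actionfun)$, where $\delta_\actionfun(n) = \actionfun(n)$ for every leaf $n$, and that $\actionfun(n)$ is by convention the value $f(\actionfun_n)$ of the parameter $\actionfun_n$. Next, Def.~\ref{def:induced-policy} states that $\sched[{\template(f)}](s) = \delta_\actionfun(\leaf[{\template(f)}](s))$ whenever this action is available in $s$, and $\sched[{\template(f)}](s) = \actrandom$ otherwise. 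Since $n = \leaf[{\template(f)}](s)$ by hypothesis, substituting $\delta_\actionfun(n) = f(\actionfun_n)$ into the case distinction of Def.~\ref{def:induced-policy} yields precisely the claimed statement: $\sched[{\template(f)}](s) = f(\actionfun_n)$ if $f(\actionfun_n) \in \Act(s)$, and $\sched[{\template(f)}](s) = \actrandom$ otherwise.

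The only point requiring a sentence of justification is that $\leaf[{\template(f)}](s)$ is well-defined, i.e.\ that there is a unique leaf $n$ with $s \in \sts{\template(f)}{n}$; this is exactly the remark following the definition of corresponding states, namely that $\{\sts{\tree}{n} \mid n \in L\}$ partitions $S$ for any DT $\tree$, applied to $\tree = \template(f)$. With that in hand, the lemma is a direct rewriting, so there is no real obstacle here — it is essentially a bookkeeping statement that makes the parameter-level description of the induced policy explicit, to be used later in the correctness proofs of the SMT encoding (Thm.~\ref{thm:smt}) and of harmonization (Thm.~\ref{thm:harmonization}).
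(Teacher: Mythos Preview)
Your proposal is correct and matches the paper's own proof essentially line for line: both simply unfold Def.~\ref{def:parameterization} to get $\template(f) = (T,\gamma_{\decisionfun,\boundfun},\delta_\actionfun)$ with $\delta_\actionfun(n) = f(\actionfun_n)$, and then apply Def.~\ref{def:induced-policy} at $n = \leaf[\template(f)](s)$. Your extra sentence on the well-definedness of $\leaf[\template(f)](s)$ is a harmless elaboration the paper omits.
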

\begin{proof}
Recall from Def.~\ref{def:parameterization} that $\template(f) = (T, \gamma_{\decisionfun,\boundfun},\delta_\actionfun)$ where
$\delta_\actionfun(n) \coloneqq f(\actionfun_n)$.
From Def.~\ref{def:induced-policy} it follows that $\sched[\template(f)](s) = \delta_\actionfun(n) = f(\actionfun_n)$ if $f(\actionfun_n) \in \Act(s)$ and $\sched[\template(f)](s) = \actrandom$ otherwise.
\end{proof}

\subsection{Proof of Corollary~\ref{corollary:random}}
\label{appendix:proof:random}

In the following, we assume that the optimizing direction of the specification is to maximize the value of the policy but it is clear that this proof works even for minizing specifications. Let $M$ be an MDP with the optimal value $V(M) = V(\sigma*)$ achieved by a memoryless deterministic scheduler $\sigma^{*}$ (as per Puterman~\cite{Put94} this holds for all the specifications considered in this paper). Let us consider an MDP $M'$, which was created from $M$ by adding arbitrarily many new actions, which in each state behave as some distribution over available actions in that state. We prove that the optimal values are equal i.e. $V(M') = V(M)$ by proving:

\begin{enumerate}
    \item $V(M') \geq V(M)$ - this follows directly from the construction of $M'$ since its action space is a superset of $M$.
    \vspace{1em}
    \item $V(M') \leq V(M)$ - if there exists a scheduler $\sigma'$ in $M'$ such that $V(\sigma') > V(M)$, then $\sigma'$ can be transformed into a randomized scheduler $\sigma_{rand}$ for the original MDP $M$ with the same value. This transformation simply changes all the decisions of the scheduler that chose the newly added actions in $M'$ and replaces them with a distribution over actions in $M$ that corresponds to these actions. This means that $V(\sigma_{rand}) > V(\sigma*)$ in $M$, which contradicts~\cite{Put94} Theorem 7.9.1 that the optimal value is achievable by a memoryless deterministic scheduler.
\end{enumerate}

\subsection{Proof of Theorem~\ref{thm:smt}}
\label{appendix:proof:encoding}


Assume a policy $\sched \in \schedulers$, a set $\family \subseteq 
\superfamily$ of parameterizations and an arbitrary $f \in \superfamily$.
By Lemma~\ref{lemma:family}, to prove equivalence in Thm.~\ref{thm:smt}, it is sufficient to assume that $f \in \family$ and show that  $f \models \bigwedge_{s \in S, \sched(s) \neq \bot} \fact_{s,\sched(s)}$ iff $\sched[\template(f)] = \sched$.
In particular, we will show that for an arbitrary state $s \in S$, $\sched(s) \neq \bot$, it holds that $f \models \fact_{s,\sched(s)}$ iff $\sched(s) = \sched[\template(f)](s)$.

\paragraph{($\Rightarrow$)}
Assume $f \models \fact_{s,\sched(s)}$, i.e. $f \models \bigwedge_{n \in L} \left(\fsel_{s,n} \rightarrow \fact_{s,\sched(s),n} \right)$.
Observe the left-hand side $\fsel_{s,n}$ of the implication. Substituting each variable $x \in \params$ with its value $f(x)$, we obtain

$$\fsel_{s,n}(f) = \bigwedge\textstyle_{i=0}^{k-1} \bigwedge\textstyle_{j=1}^{|\variables|} \quad \big(s(v_{f(\decisionfun_{n_i})})\meetsin f(\boundfun_{n_i}) \big)$$
Note that $\meetsin$ can be either $\leq$ or $>$ depending on the choice of the leaf node $n \in L$.
Thus, there exists exactly one such $n$ for which expression $\fsel_{s,n}(f)$ is true; in fact, $n = \leaf[\template(f)](s)$.
We obtain $f \models \fsel_{s,n}$, from which $f \models \fact_{s,\sched(s),n}$.

\begin{itemize}
\item Case $\sched(s) \neq \actrandom$. From $f \models \fact_{s,\sched(s),n}$ we obtain $f(\actionfun_n) = \sched(s)$ and thus, by Lemma~\ref{lemma:instantiation}, we get $\sched[\template(f)](s) = f(\actionfun_n) = \sched(s)$.
\item Case $\sched(s) = \actrandom$. From $f \models \fact_{s,\sched(s),n}$ we obtain $f(\actionfun_n) \in \{\actrandom\} \cup \Act {\setminus} \Act(s)$, i.e. either $f(\actionfun_n) = \actrandom$ or $f(\actionfun_n) \not \in \Act(s)$; in either case, by Lemma~\ref{lemma:instantiation}, we get $\sched[\template(f)](s) = \actrandom = \sched(s)$.

\end{itemize}

\paragraph{($\Leftarrow$)} The proof is analogous to ($\Rightarrow$).



\subsection{Proof of Theorem~\ref{thm:harmonization}}
\label{appendix:proof:harmonization}

Assume parameterizations $f_1,f_2 \in \superfamily$, two (identical) sets of parameterizations $\family,\family'$, a policy $\sched \in \schedulers$, and an evaluation $f_h \in \integers$ of the variable $h$.
Then $f_h,f_1,f_2 \models \fpol^H(\family,\sched)$ iff $f_1,f_2$ are harmonizing for $\sched$ via a harmonizing variable $x_{f_h} \in \params$.

Assume two wo sets $\family,\family' \subseteq \superfamily$ of parameterizations where $\family'$ is a set identical to $\family$ (but associated with primed variables), a policy $\sched$ that is not $\template(\family)$-impelementable, two parameterizations $f_1,f_2 \in \superfamily$ and an evaluation $f_h \in \integers$ of the variable $h$.

By Def.~\ref{def:harmonization} and Cor.~\ref{corollary:family-primed}, to prove equivalence in Thm.~\ref{thm:harmonization}, it is sufficient to assume that $f_1,f_2 \in \family$ and show that  $f_h,f_1,f_2 \models \fharm(\family) \land \bigwedge_{s \in \criticalstates} \left( \fact_{s, \sched(s)} \lor \fact'_{s,\sched(s)} \right)$ iff $\exists!\, x \in \params : f_1(x) \neq f_2(x)$ and for any state $s \in \criticalstates$, $\sched(s) \in \{\sched[\template(f_1)](s),\sched[\template(f_2)](s)\}$.
We will show this by proving two equivalences: (1) $f_h,f_1,f_2 \models \fharm(\family)$ iff $\exists!\, x \in \params : f_1(x) \neq f_2(x)$, and (2) given arbitrary state $s \in \criticalstates$, $f_1,f_2 \models \left( \fact_{s, \sched(s)} \lor \fact'_{s,\sched(s)} \right)$ iff $\sched(s) \in \{\sched[\template(f_1)](s),\sched[\template(f_2)](s)\}$.

\vspace{0.4em}\noindent(1, $\Rightarrow$) $f_h,f_1,f_2 \models \fharm(\family) \Leftrightarrow \bigwedge_{1 \leq i \leq |\params|} f_h \neq i \,\rightarrow\, f_1(x_i) = f_2(x'_i) \Leftrightarrow f_1$ and $f_2$ are equal for all variables except $x_{f_h}/x'_{f_h}$. Since $f_1 \neq f_2$ -- otherwise it would hold that $f_1 \models \bigwedge_{s \in \criticalstates, \sched(s) \neq \bot} \fact_{s,\sched(s)}$, which contradicts the assumption that $\sched$ is not $\template(\family)$-impelementable (on \criticalstates) -- then it must hold that $f_1(x_{f_h}) \neq f_2(x'_{f_h})$, i.e. (in terms of non-primed variables), $f_1(x_{f_h}) \neq f_2(x_{f_h})$.

\vspace{0.4em}\noindent(1, $\Leftarrow$)  Assume there exists a unique $x_{f_h} \in \params$ s.t. $f_1(x_{f_h}) \neq f_2(x_{f_h})$, i.e. (in terms of primed variables) $f_1(x_{f_h}) \neq f_2(x'_{f_h})$. Then $\bigwedge_{1 \leq i \leq |\params|} f_h \neq i \,\rightarrow\, f_1(x_i) = f_2(x'_i)$, i.e. $f_h,f_1,f_2 \models \fharm(\family)$.

\vspace{0.4em}\noindent(2, $\Leftrightarrow$) Assume $f_1,f_2 \models \left( \fact_{s, \sched(s)} \lor \fact'_{s,\sched(s)} \right)$, i.e. $f_1 \models \fact_{s, \sched(s)}$ or $f_2 \models \fact'_{s,\sched(s)}$. According to Thm.~\ref{thm:smt}, this is equivalent to $\sched(s) = \sched[\template(f_1)](s) \lor \sched(s) = \sched[\template(f_2)](s)$, i.e.~$\sched(s) \in \{\sched[\template(f_1)](s),\sched[\template(f_2)](s)\}$.

\subsection{Proof of Theorem~\ref{thm:np-hardness}}
\label{appendix:thm:np-hardness}

The main idea of the proof is the reduction from the Exact Cover by 3-sets (X3C) to the question of whether there is a decision tree of depth $k$ for MDP $M$ that implements a policy which reaches a goal state with probability above $0.5$. The X3C problem is known to be NP-complete. 
X3C asks whether for a set $U$ with $|U| = 3q$ and a collection $T = \{T_1,T_2,...,T_j\}$ with $|T_i| = 3$, $T_i \subseteq U$, we can find a subset $T'$ of $T$ such that every element $u \in U$ occurs in exactly one member of $T'$ (i.e., $T'$, is an exact cover of $U$).

\paragraph{Reduction.} In the reduction, we assume $U = \{1,2,3,...,n\}$ is an ordered set. For every instance of X3C we can construct an MDP $M = (S, s_{0}, Act, P)$ where $S = U \cup \{\bot, n+1, n+2, n+3, n+4\}, s_{0} = 1, Act = U \cup \{n+1, n+2, n+3\}$, and the transition function $P$ defined s.t.:
$\forall s \in (U \cup \{n+1, n+2, n+3\})$ and $\forall a \in (U \cup \{n+1, n+2, n+3\}): P(s,a,s+1) = 1$ if $s = a$ and $P(s,a,\bot) = 1$ otherwise.
The state $n+4$ is the goal state, and the state $\bot$ is a sink state. Fig.~\ref{fig:x3c:mdp} showcases the main idea of the MDP construction.

\begin{figure}[t]
    \begin{subfigure}{.38\textwidth}
    \centering
    \includegraphics[width=1\textwidth]{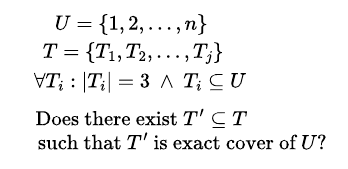}
    \caption{}
    \label{fig:x3c:x3c}
    \end{subfigure}%
    \begin{subfigure}[b]{.6\textwidth}
    \centering
    \includegraphics[width=0.95\textwidth]{figures/x3c-to-mdp.pdf}
    \caption{}
    \label{fig:x3c:mdp}
    \end{subfigure}%
    
    \begin{subfigure}[b]{.98\textwidth}
    \centering
    \includegraphics[width=1\textwidth]{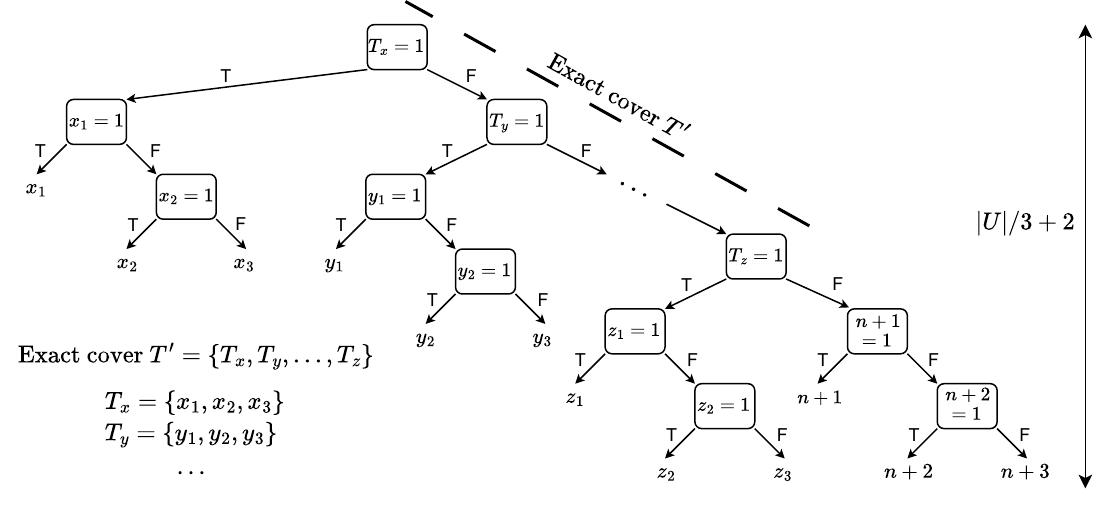}
    \caption{}
    \label{fig:x3c:tree}
    \end{subfigure}%
    \vspace{-0.5em}
\caption{
(a) The Exact Cover by 3-sets (X3C) problem formulation. (b) MDP construction for the reduction from X3C problem formulation. Note that the values of state variables $T_i$ were chosen arbitrarily and serve only as an example. This figure is the same as Fig.~\ref{fig:x3c-to-mdp}. (c) DT corresponding to some exact cover $T'$ with a depth $|U|/3+2$. 
}
\label{fig:x3c}
\end{figure}

The main idea is that in each state $s$ (except $\bot$ and $n+4$), a unique action for that state needs to be selected to reach the next state $s+1$ and to avoid reaching the sink state (reaching the sink state means the value of the policy is 0) i.e. the decision tree that achieves value above $0.5$ indeed needs to distinguish all these n+3 states into separate leaf nodes.

\paragraph{Definition of state variables.} The MDP will have state variables set $V = T \cup S$, and the state mapping $s: V \rightarrow Z$ such that $\forall T_i \in T: s(T_i) = 1$ if $s \in T_i$ and $s(T_i) = 0$ otherwise, 
and $\forall v \in S: s(v) = 1$ if $s = v$ and $s(v) = 0$ otherwise. 
Thus, for each state $s$, only those predicates $T_i$ will be true where the element of $U$ corresponding to state $s$ is in $T_i$, and there will be a set of predicates for each state that are only true in one unique state.

\paragraph{Definition of threshold.} Finally, the bound on the depth $k$ will be set as $|U|/3 + 2$. Notice that choosing the predicate corresponding to some subset $T_i$ is able to distinguish at most three states from the rest in an inner node of the tree.

\paragraph{Correctness.} As shown in [28], if the size $|U| > 7$, then there must be at least $k-2$ (or $|U|/3$) decision predicates corresponding to sets in $T$ chosen in the DT. 
If at least two of the predicates chosen in these first $k-2$ decisions are true for the same state, then the inner node at depth $k-2$, which you get to when you follow the "always false" path from the root node, will need to decide about at least four states that are not yet distinguished in the upper levels. 
Three of the states are $n+1$, $n+2$, and $n+3$ since no predicates corresponding to $T$ hold true in them and one state for which no predicate held true in the upper levels. 
It is not possible to distinguish between these four states using a 2-DT (or equivalently, using the remaining two levels of the k-DT, as not to break the $k$-bounded depth) since one can only use predicates which hold true in one of the states and are false in the rest. 
This means the only way a $k$-depth tree which distinguishes all states into unique leaf nodes exists is if there was an exact cover in T, and this exact cover can be retrieved from the tree by looking at the predicates chosen in the first k-2 levels of the tree. The $k$-DT showcasing this idea is shown in Fig.~\ref{fig:x3c:tree}.

\vspace{0.5em}
\noindent
Note that in the reduction, we can introduce further template restrictions other than the depth of the tree e.g. for the tree to only use those predicates that correspond to sets in $T$ in some of the inner nodes.

\subsection{Proof of Theorem~\ref{thm:sound}}
\label{appendix:thm:sound:proof}

We reiterate that any nontrivial splitting makes Algorithm~\ref{alg:refinement} terminate since the family \superfamily is finite.
Let $\family_{\max} \coloneqq \mathrm{arg\,max}_{f \in \superfamily}V(f)$ be a set of optimal assignments, i.e. correct outputs of Algorithm~\ref{alg:refinement}.
Note that $\family_{\max} \neq \emptyset$ and thus Algorithm~\ref{alg:refinement} is complete iff it is sound.
To show the soundness, it is sufficient to inspect ll.\ref{alg:refinement:prune}-\ref{alg:refinement:sat}, where a subset can be pruned, and show that on these lines we do not discard some $f^* \in \family_{\max}$.
Assume an arbitrary iteration of the algorithm where subset $\family \subseteq \superfamily$ is analyzed.
Let \sched be a maximizing policy for $M(\family)$ obtained on~l.\ref{alg:refinement:solve}.
From Proposition.~\ref{proposition:family-mdp} it follows that $V(\sched) \geq \max_{f \in \family}V(f)$.
If~\family is pruned on~l.\ref{alg:refinement:prune} because $V(\sched) \leq V_{best}$, then $\max_{f \in \family}V(f) \leq V_{best}$, i.e.~no $f \in \family$ has better value than $f_{best}$.
Otherwise, if~\family is pruned on~l.\ref{alg:refinement:sat}, $f_{best}$ is updated with assignment $f$ for which $V(f) = V(\sched) \geq max_{f \in \family}V(f)$.
In both cases, no $f^* \in \family_{\max}$ is discarded unless $f_{best} \in \family_{\max}$.

\section{Benchmarks}
\label{appendix:model-info}

This section contains information about all of the 21 considered models. This information is summarized in Table~\ref{tab:appendix-model-info}. All the benchmarks are available at \url{https://anonymous.4open.science/r/dt-synthesis-cav-25-5B31}.

\begin{table}[h]
\setlength{\tabcolsep}{2pt}
\centering
\begin{tabular}{l rrrr@{\hskip 32pt}l rrrr}
\toprule
\multirow{1}{*}{model} & vars & $|S|$ & $|Act|$ & choices & \multirow{1}{*}{model} & vars & $|S|$ & $|Act|$ & choices \\
\midrule
maze-7 & 9 & 2039 & 5 & 10195 & maze-steps & 5 & 198 & 5 & 990 \\
3d & 3 & 125 & 7 & 875 & blackjack & 3 & 533 & 3 & 1599 \\
lake-4 & 2 & 16 & 5 & 80 & lake-8 & 2 & 64 & 5 & 320 \\
lake-12 & 2 & 144 & 5 & 720 & inventory & 1 & 101 & 101 & 10201 \\
sys-1 & 8 & 256 & 10 & 2560 & sys-2 & 8 & 256 & 10 & 2560 \\
sys-tree & 7 & 128 & 9 & 1152 & tictactoe & 27 & 2424 & 10 & 24240 \\
traffic & 4 & 361 & 3 & 1083 & consensus & 5 & 6161 & 14 & 86254 \\
csma & 11 & 7959 & 17 & 135303 & firewire & 10 & 4094 & 14 & 57316 \\
philos & 4 & 3193 & 53 & 169229 & pnueli & 3 & 1950 & 63 & 122850 \\
rabin & 19 & 10241 & 17 & 174097 & resource & 7 & 3292 & 5 & 16460 \\
wlan & 11 & 3127 & 34 & 106318 & consensus-Q4* & 7 & 22656 & 26 & 60544  \\
csma-Q4* & 15 & 1.5M & 25 & 1.6M & & & & & \\
\bottomrule
\end{tabular}
\vspace{1em}
\caption{Information about the models used in the experimental evaluation. *~denotes the models which were not part of the Q1 and Q2 experiments.}
\label{tab:appendix-model-info}
\end{table}

\section{Complete experimental results}
\label{app:results}






The complete comparison with \omdt and \dtcontrol includes 21 models and 8 fixed depths, resulting in 168 different evaluations. As such, we present the resulting table as a CSV file available here: \url{https://anonymous.4open.science/r/dt-synthesis-cav-25-5B31/results.csv}.

\subsection{Runtimes}
\label{app:runtimes}

Figure~\ref{fig:appendix-omdt-runtimes} compares the runtimes between \dtpaynt and \omdt from experiments in Q1. Points above the diagonal indicate that \dtpaynt was faster. As can be seen, if the performance in terms of quality is similar, \dtpaynt still provides an advantage as it is able to produce the results significantly faster most of the time.

\setlength{\textfloatsep}{12pt}
\begin{figure}[t]
    \centering
    \resizebox{0.8\textwidth}{!}{\input{figures/plot-appendix-omdt-runtimes.pgf}}
    \vspace{-1em}
    \caption{Comparison in runtimes between \dtpaynt and \omdt. Only experiments where the normalized values of the trees found by the tools are within 10\% of each other.}
    \label{fig:appendix-omdt-runtimes}
\end{figure}

\subsection{Complete Q2 graph}

Figure~\ref{fig:appendix-complete-Q2} shows the same plot as Fig.~\ref{fig:dtcontrol-bar-plot}, but it includes all of the 21 models used in this experiment.

\setlength{\textfloatsep}{12pt}
\begin{figure}[t]
    \centering
    \resizebox{\textwidth}{!}{\input{figures/plot-appendix-q2-complete.pgf}}
    \vspace{-1em}
    \caption{Complete graph for Q2, including all models from our benchmark set.}
    \label{fig:appendix-complete-Q2}
\end{figure}





\end{document}